\newcommand{\f}{\frac}
\newcommand{\cd}{\cdot}
\newcommand{\bn}{\binom}
\newcommand{\lds}{\ldots}
\newcommand{\bs}{\backslash}
\newcommand{\s}{\subseteq}
\newcommand{\BE}{\begin{enumerate}}
\newcommand{\EE}{\end{enumerate}}
\newcommand{\im}{\item}
\newcommand{\BI}{\begin{itemize}}
\newcommand{\EI}{\end{itemize}}
\newcommand{\eps}{\epsilon}
\newcommand{\e}{\epsilon}
\newcommand{\be}{\beta}
\newcommand{\om}{\omega}
\newcommand{\Om}{\Omega}
\newcommand{\m}{\mathcal}
\newcommand{\E}{\mathbb E}
\newcommand{\poly}{\textup{poly}}
\newcommand{\lp}{\left(}
\newcommand{\rp}{\right)}
\newcommand{\lb}{\left[}
\newcommand{\rb}{\right]}
\newcommand{\lmt}{\left[\begin{matrix}}
\newcommand{\rmt}{\end{matrix}\right]}
\newtheorem{theorem}{Theorem}
\newtheorem{lemma}{Lemma}
\newtheorem{definition}{Definition}
\newtheorem{corollary}{Corollary}
\newtheorem{observation}{Observation}
\newtheorem{claim}{Claim}
\newtheorem{subclaim}{Subclaim}
\newtheorem{assumption}{Assumption}
\newtheorem{reduction}{Reduction}
\newcommand{\BT}{\begin{theorem}}
\newcommand{\ET}{\end{theorem}}
\newcommand{\BL}{\begin{lemma}}
\newcommand{\EL}{\end{lemma}}
\newcommand{\BD}{\begin{definition}}
\newcommand{\ED}{\end{definition}}
\newcommand{\BC}{\begin{corollary}}
\newcommand{\EC}{\end{corollary}}
\newcommand{\BO}{\begin{observation}}
\newcommand{\EO}{\end{observation}}
\newcommand{\BCL}{\begin{claim}}
\newcommand{\ECL}{\end{claim}}
\newcommand{\BP}{\begin{proof}}
\newcommand{\EP}{\end{proof}}
\newcommand{\BPS}{\begin{proof}[Proof (Sketch)]}
\newcommand{\EPS}{\end{proof}}
\newcommand{\BA}{\begin{assumption}}
\newcommand{\EA}{\end{assumption}}
\newcommand{\BR}{\begin{reduction}}
\newcommand{\ER}{\end{reduction}}
\newcommand{\BSCL}{\begin{subclaim}}
\newcommand{\ESCL}{\end{subclaim}}
\Crefname{observation}{Observation}{Observations}
\Crefname{assumption}{Assumption}{Assumptions}
\Crefname{reduction}{Reduction}{Reductions}
\Crefname{claim}{Claim}{Claims}
\Crefname{subclaim}{Subclaim}{Sublaims}
\newcommand{\Os}{{O^\star}}
\renewcommand{\bar}{\overline}
\newcommand{\tw}{\textup{tw}}
\newcommand{\Inf}{\texttt{\textup{Infeasible}}\xspace}
\newcommand{\cnt}{\texttt{\textup{count}}\xspace}
\newcommand{\cntA}{\texttt{\textup{countA}}\xspace}
\newcommand{\cntB}{\texttt{\textup{countB}}\xspace}
\newcommand{\cntthree}{\texttt{\textup{count3}}\xspace}
\newcommand{\cntzero}{\texttt{\textup{count0}}\xspace}
\newcommand{\Heads}{\texttt{\textup{Heads}}\xspace}
\newcommand{\treewidthDP}{\textup{\texttt{treewidthDP}}\xspace}
\newcommand{\forestDP}{\textup{\texttt{forestDP}}\xspace}
\newenvironment{subproof}[1][\proofname]{%
  \begin{proof}[#1]%
}{%
  \end{proof}%
}
\algnewcommand{\IIf}[1]{\State\algorithmicif\ #1\ \algorithmicthen}
\algnewcommand{\EndIIf}{\unskip\ \algorithmicend\ \algorithmicif}
\algrenewcommand\algorithmiccomment[2][\normalsize]{{#1\hfill\(\triangleright\) \emph{#2}}}
\newcommand{\LineIf}[2]{ \State \algorithmicif\ {#1}\ \algorithmicthen\ {#2} }
\newcounter{algocounter}
\newcommand{\mylabel}[2]% #1=name, #2 = contents
    {\refstepcounter{algocounter}\protected@write\@auxout{}{\string\newlabel{#1}{{\textcolor{black}{\textup{#2}}}{\thepage}%
      {\@currentlabelname}{\@currentHref}{}}}}}%
\begin{document}

\title{Detecting Feedback Vertex Sets of Size $k$ in $\Os(2.7^k)$ Time}
\author{Jason Li\thanks{Carnegie Mellon University, \url{jmli@andrew.cmu.edu}.} \and Jesper Nederlof\thanks{Eindhoven University of Technology, \url{j.nederlof@tue.nl}. Supported by the Netherlands Organization for Scientific Research (NWO) under project no. 639.021.438 and 024.002.003. and the European Research Council under project no. 617951.}}
\date{\today}
\maketitle
\begin{abstract}
        In the Feedback Vertex Set problem, one is given an undirected graph $G$ and an integer $k$, and one needs to determine whether there exists a set of $k$ vertices that intersects all cycles of $G$ (a so-called feedback vertex set).
        Feedback Vertex Set is one of the most central problems in parameterized complexity: It served as an excellent test bed for many important algorithmic techniques in the field such as Iterative Compression~[Guo et al. (JCSS'06)], Randomized Branching~[Becker et al. (J. Artif. Intell. Res'00)] and Cut\&Count~[Cygan et al. (FOCS'11)].
        In particular, there has been a long race for the smallest dependence $f(k)$ in run times of the type $O^\star(f(k))$, where the $O^\star$ notation omits factors polynomial in $n$.
        This race seemed to be run in 2011, when a randomized $O^\star(3^k)$ time algorithm based on Cut\&Count was introduced. 
        
        In this work, we show the contrary and give a $O^\star(2.7^k)$ time randomized algorithm.
        Our algorithm combines all mentioned techniques with substantial new ideas:
        First, we show that, given a feedback vertex set of size $k$ of bounded average degree, a tree decomposition of width $(1-\Omega(1))k$ can be found in polynomial time.
        Second, we give a randomized branching strategy inspired by the one from~[Becker et al. (J. Artif. Intell. Res'00)] to reduce to the aforementioned bounded average degree setting. 
        Third, we obtain significant run time improvements by employing fast matrix multiplication.
\end{abstract}

\newpage
\section{Introduction}

Feedback Vertex Set (FVS) is one of the most fundamental NP-complete problems; for example, it was among Karp's original 21 problems~\cite{DBLP:conf/coco/Karp72}.
In FVS we are given an undirected graph $G$ and integer $k$, and are asked whether there exists a set $F$ such that $G[V \setminus F]$ is a forest (i.e.\ $F$ intersects all cycles of $G$).
In the realm of parameterized complexity, where we aim for algorithms with running times of the type $\Os(f(k))$\footnote{The $\Os()$ notation omits factors polynomial in $n$.} with $f(k)$ as small as possible (albeit exponential), FVS is clearly one of the most central problems: To quote~\cite{DBLP:conf/soda/Cao18}, to date the number of parameterized algorithms for FVS published in the literature exceeds the number of parameterized algorithms for any other single problem.

There are several reasons why FVS is the one of the most central problem in parameterized complexity:
First and foremost, the main point of parameterized complexity, being that in many instance the parameter $k$ is small, is very applicable for FVS: In the instances arising from e.g.\ resolving deadlocks in systems of processors~\cite{DBLP:journals/siamcomp/Bar-YehudaGNR98}, or from Bayesian inference or constraint satisfaction, one is only interested in whether small FVS's exist~\cite{DBLP:journals/jair/BeckerBG00, DBLP:journals/ai/Dechter90, DBLP:journals/jacm/WangLS85}.
Second, FVS is a very natural graph modification problems (remove/add few vertices/edges to make the graph satisfy a certain property) that serves as excellent starting point for many other graph modification problems such a planarization or treewidth-deletion (see e.g.~\cite{DBLP:journals/corr/abs-1804-01366} for a recent overview).
Third, FVS and many of its variants (see e.g.~\cite{DBLP:conf/soda/KimK18}) admit elegant duality theorems such as the Erd\"os-P\'osa property; understanding their use in designing algorithms can be instrumental to solve many problems different from FVS faster.
The popularity of FVS also led to work on a broad spectrum of its variations such as Subset, Group, Connected, Simultaneous, or Independent FVS (see for example~\cite{DBLP:conf/iwpec/AgrawalGSS16} and the references therein).

In this paper we study the most basic setting concerning the parameterized complexity of FVS, and aim to design an algorithm with runtime $\Os(f(k))$ with $f(k)$ as small as possible.

One motivation for this study is that we want to get a better insight into the fine-grained complexity of computational problems: How hard is FVS really to solve in the worst-case setting?
Can the current algorithms still be improved significantly or are they close to some computational barrier implied by some hypothesis or conjecture such as, for example, the Strong Exponential Time Hypothesis?

A second motivation is that, lowering the exponential factor $f(k)$ of the running time is a logical first step towards more practical algorithms.
For example, the vertex cover problem\footnote{Given a graph $G$ and integer $k$, find $k$ vertices of $G$ that intersect every edge of $G$.} can be solved in $O(1.28^k + kn)$ time~\cite{DBLP:journals/tcs/ChenKX10},  and a similar running time for FVS would be entirely consistent with our current knowledge. Algorithms with such run times likely outperform other algorithms for a wide variety of instances from practice.
Note there already has been considerable interest in practical algorithms for FVS as it was the subject of the first Parameterized Algorithms and Computational Experiments Challenge (PACE, see e.g.~\cite{DBLP:conf/iwpec/DellHJKKR16}).

For a third motivation of such a study, experience shows an improvement of the running time algorithms for well-studied benchmark problems as FVS naturally goes hand in hand with important new algorithmic tools: The \emph{`race'} for the fastest algorithm for FVS and its variants gave rise to important techniques in parameterized complexity such as Iterative Compression~\cite{DBLP:journals/mst/DehneFLRS07, DBLP:journals/jcss/GuoGHNW06, DBLP:journals/orl/ReedSV04}, Randomized Branching~\cite{DBLP:journals/jair/BeckerBG00} and Cut\&Count~\cite{cutandcount-arxiv}.

\paragraph{The race for the fastest FVS algorithm.}
The aforementioned `race' (see Figure~\ref{fig:race}) started in the early days of parameterized complexity (see e.g~\cite{DBLP:conf/focs/AbrahamsonEFM89}) with an $\Os((2k+1)^k)$ time deterministic algorithm by Downey and Fellows~\cite{DBLP:conf/dagstuhl/DowneyF92}.
We briefly discuss four relevant results from this race.
A substantial improvement of the algorithm from~\cite{DBLP:conf/dagstuhl/DowneyF92} to an $\Os(4^k)$ time randomized algorithm was obtained by Becker et al.~\cite{DBLP:journals/jair/BeckerBG00}.
Their simple but powerful idea is to argue that, if some simple reduction rules do not apply, a random `probabilistic branching' procedure works well.
A few years later, in~\cite{DBLP:journals/mst/DehneFLRS07, DBLP:journals/jcss/GuoGHNW06} it was shown how to obtain $\Os(10.6^k)$ time in the deterministic regime using \emph{Iterative Compression}.
This technique allows the algorithm to assume a feedback vertex set of size $k+1$ is given, which turns out to be useful for detecting feedback vertex sets of size $k$.
The race however stagnated with the paper that introduced the Cut\&Count technique~\cite{cutandcount-arxiv} and gave a $\Os(3^k)$ time randomized algorithm.
In particular, the Cut\&Count technique gave a $\Os(3^\tw)$ time algorithm for FVS if a \emph{tree decomposition} (see Section~\ref{sec:prel} for definitions) of width \tw{} is given, and this assumption can be made due to the iterative compression technique.
After this result, no progress on randomized algorithms for FVS was made as it seemed that improvements over the $\Os(3^\tw)$ running time were not within reach:
In~\cite{cutandcount-arxiv} it was also proven that any $\Os((3-\eps)^{\tw})$ time algorithm, for some $\eps>0$, would violate the SETH.
It was therefore natural to expect the base $3$ is also optimal for the parameterization by the solution size $k$.
Moreover, the very similar $\Os(2^k)$ time algorithm from~\cite{cutandcount-arxiv} for the Connected Vertex Cover problem was shown to be optimal under the Set Cover Conjecture~\cite{DBLP:journals/talg/CyganDLMNOPSW16}.

\begin{figure}
        \center
        \begin{tabular}{|l|l|l|l|}
                \hline
                \textbf{Reference} & \textbf{Running Time} & \textbf{Deterministic?} & \textbf{Year} \\
                \hline
                \hline
                Downey and Fellows~\cite{DBLP:conf/dagstuhl/DowneyF92} & $\Os((2k+1)^k)$ & YES &  1992 \\
                \hline
                Bodlaender~\cite{DBLP:journals/ijfcs/Bodlaender94} & $\Os(17(k^4)!)$ & YES & 1994 \\
                \hline
                Becker et al.~\cite{DBLP:journals/jair/BeckerBG00} & $\Os(4^k)$ & \hfill NO &  2000 \\             
                \hline          
                Raman et al.~\cite{DBLP:conf/isaac/RamanSS02} & $\Os(12^k + (4 \log k)^k)$ &YES &  2002 \\
                \hline
                Kanj et al.~\cite{DBLP:conf/iwpec/KanjPS04} & $\Os((2 \log k + 2 \log \log k + 18)^k )$ &YES & 2004 \\
                \hline
                Raman et al.~\cite{DBLP:journals/talg/RamanSS06} & $\Os((12 \log k/ \log \log k + 6)^k)$ &YES & 2006 \\                               
                \hline
                Guo et al.~\cite{DBLP:journals/jcss/GuoGHNW06}  & $\Os(37.7^k)$ &YES &  2006 \\
                \hline
                Dehne et al.~\cite{DBLP:journals/mst/DehneFLRS07} & $\Os(10.6^k)$ &YES & 2007 \\          
                \hline
                Chen et al.~\cite{DBLP:journals/jcss/ChenFLLV08} & $\Os(5^k)$ &YES &  2008 \\
                \hline
                Cao et al.~\cite{DBLP:journals/algorithmica/CaoC015} & $\Os(3.83^k)$ & YES & 2010 \\         
                \hline
                Cygan et al.~\cite{cutandcount-arxiv} & $\Os(3^k)$ &\hfill NO &  2011 \\              
                \hline
                Kociumaka and Pilipczuk~\cite{DBLP:journals/ipl/KociumakaP14} & $\Os(3.62^k)$ &YES & 2014 \\          
                \hline          
                this paper & $\Os(2.7^k)$, or $\Os(2.6252^k)$ if $\om=2$ &\hfill NO & 2020 \\            
                \hline          
        \end{tabular}
        \caption{The `race' for the fastest parameterized algorithm for Feedback Vertex Set.}
        \label{fig:race}
\end{figure}

\paragraph{Our contributions.}
We show that, somewhat surprisingly, the $\Os(3^k)$ time Cut\&Count algorithm for FVS can be improved:
\begin{restatable}{theorem}{mainfaster}\label{thm:main-faster}
There is a randomized algorithm that solves FVS in time $\Os(2.69998^k)$. If $\om=2$, then the algorithm takes time $\Os(2.6252^k)$.
\end{restatable}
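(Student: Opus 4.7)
My plan follows the three-step outline previewed in the abstract. Throughout, I rely on iterative compression to reduce to the compression variant: given a graph $G$ and a set $W \s V(G)$ of size $k+1$ such that $G[V(G) \setminus W]$ is a forest, decide whether $G$ admits a feedback vertex set of size $k$. From such a $W$ one obtains in polynomial time a tree decomposition of $G$ of width at most $|W|+1$, and the Cut\&Count dynamic program of~\cite{cutandcount-arxiv} runs in $\Os(3^\tw)$ time, recovering the $\Os(3^k)$ bound. The goal is to beat this both by shrinking $\tw$ whenever possible and by speeding up the DP with fast matrix multiplication.

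The first ingredient is the structural claim previewed in the abstract: if the compression set $W$ has bounded average degree in $G$, then a tree decomposition of $G$ of width $(1-\Omega(1))k$ can be constructed in polynomial time. Intuitively, since $G[V(G)\setminus W]$ is a forest and $W$ has few incident edges on average, many vertices of $W$ can be cheaply separated from one another, so not every bag needs to contain all of $W$. A natural approach is to iteratively identify a low-degree subset of $W$ that forms a balanced separator of the graph, recurse on each side, and glue the resulting decompositions together; a pigeonhole-style count of edges incident to $W$ should yield a width bound of the form $(1-c(d))k$ for an explicit $c(d)>0$ depending on the average-degree threshold $d$.

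The second ingredient reduces the general case to the bounded-average-degree case via randomized branching in the spirit of~\cite{DBLP:journals/jair/BeckerBG00}. Whenever $W$ contains many high-degree vertices, I would pick such a vertex $v$ and randomly branch: either commit $v$ to the tentative solution (decreasing $k$ by one) or commit $v$ to the forest side (removing $v$ from $W$ and forcing consequences for its neighbors). The sampling probabilities should be tuned so that the expected number of leaves in the branching tree, times the cost per leaf (which is $\Os((3^{1-c(d)})^k)$ after applying the first ingredient and Cut\&Count), is at most $\Os(2.7^k)$. Repeating the overall procedure $\poly(n) \cd 2.7^k$ times then boosts the success probability to a constant.

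The final ingredient is a matrix-multiplication speedup of the Cut\&Count DP on the resulting tree decomposition: the transitions at join bags can be recast as products of matrices indexed by bag states, so that fast matrix multiplication with exponent $\om$ replaces the naive $\Os(3^\tw)$ bound with something strictly smaller, refining the overall bound to $\Os(2.6252^k)$ when $\om=2$. The main obstacle I expect is balancing the branching savings of the second ingredient against the treewidth savings of the first, so that all regimes of average degree produce a common exponent: the constant $c(d)$ and the branching recurrence must be tuned in tandem so that the worst case over all average degrees matches the target base $2.7$. Making the structural claim quantitatively tight, with a close-to-optimal tradeoff between the average-degree threshold and the achievable treewidth reduction, is likely the most delicate part of the argument.
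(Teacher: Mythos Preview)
Your outline captures the three headline ingredients, but two of them are described in a way that does not match what actually makes the argument go through, and I do not see how to close the gap with what you wrote.

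\textbf{The randomized branching.} You describe a two-way branch on a high-degree vertex $v$ of the compression set $W$: either put $v$ in the solution or force $v$ into the forest. That is the shape of the deterministic $\Os(3.62^k)$-style algorithms, and it is not how the $3^k$ barrier is broken here. The paper's probabilistic reduction is one-sided: after exhausting the degree-$\le 2$ rules, sample a single vertex $v$ with probability proportional to $\deg(v)-3$ and \emph{always} put it in the solution. The entire analysis is about lower-bounding the probability that this guess is correct. Two regimes are handled separately: when $n\le(3-\e)k$ a uniformly random vertex already lies in an optimal FVS with probability $\ge 1/(3-\e)$; otherwise, either the $(\deg(v)-3)$-weighted sample hits the FVS with probability $\ge 1/(3-\e)$, or one can certify that $\deg(F)\le \frac{4-2\e}{1-\e}\,k$ for some optimal $F$. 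This trichotomy is what forces the bounded-average-degree case to arise with the precise constant $\bar d=(4-2\e)/(1-\e)$, and it is what lets one equalize the two exponents at the end. Your ``pick a high-degree vertex of $W$ and branch both ways'' does not yield such a dichotomy; in particular it gives no mechanism by which the \emph{failure} of the branching step implies a bound on $\deg(F)$.

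\textbf{The matrix-multiplication speedup.} You propose to speed up Cut\&Count by rewriting join-bag transitions as matrix products over the $3^{\tw}$ state space. That does not give a gain: a join in the Cut\&Count DP is a pointwise product over states (after appropriate bookkeeping), not a matrix product, and there is no obvious square-matrix structure of side $3^{\tw/2}$ to exploit. The paper's speedup is of a completely different nature. It abandons the tree decomposition and instead builds a \emph{three-way separation} $(S_1,S_2,S_3,S_{1,2},S_{1,3},S_{2,3},S_{1,2,3})$ of $G$, obtained by the same random-coloring argument as the two-way separator but with three colors. After guessing the partition on $S_{1,2,3}$, the Cut\&Count count factors over the three ``leaves'' $S_1,S_2,S_3$, and summing over the partitions of the pairwise separators $S_{i,j}$ is exactly a weighted triangle-count in an auxiliary tripartite graph with $3^{|S_{i,j}\cap F|}$ vertices per side. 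That is where $\om$ enters, and optimizing $\e$ against the resulting exponent $1-\min\{(2/3)^{\bar d},\,(3-\om)(2/3)^{\bar d}+(2\om-3)3^{-\bar d}\}$ is what gives $2.69998$ (and $2.6252$ when $\om=2$). Without the three-way separation there is no matrix-multiplication structure to exploit.

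A smaller point: your sketch of the treewidth lemma (``iteratively peel off a low-degree balanced separator and recurse'') is plausible in spirit but is not how the paper gets the sharp $(1-2^{-\bar d})k$ bound; that comes from randomly two-coloring the forest components and arguing by convexity that at least a $2^{-\bar d}$ fraction of $F$ becomes monochromatic on each side.
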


Here $2 \leq \om \leq 2.373$ is the smallest number such that two $n$ by $n$ matrices can be multiplied in $O(n^{\om})$ time~\cite{DBLP:conf/issac/Gall14a}.
Theorem~\ref{thm:main-faster} solves a natural open problem stated explicitly in previous literature~\cite{openprobs}.

Using the method from~\cite{DBLP:conf/stoc/FominGLS16} that transforms $\Os(c^k)$ time algorithms for FVS into $\Os((2-1/c)^n)$ we directly obtain the following improvement over the previously fastest $\Os(1.67^n)$ time algorithm:

\BC
There is a randomized algorithm that solves FVS on an $n$-vertex graph in time $\Os(1.6297^n)$.
\EC

The above algorithms require space exponential in $k$, but we also provide an algorithm using polynomial space at the cost of the running time:

\begin{restatable}{theorem}{main}\label{thm:main}
There is a randomized algorithm that solves FVS in time $\Os(2.8446^k)$ and polynomial space.
\end{restatable}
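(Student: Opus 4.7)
The plan is to reuse the full architecture underlying Theorem~\ref{thm:main-faster}---iterative compression, randomized branching to reach a bounded-average-degree regime, and then solving an FVS instance that comes equipped with a tree decomposition of width at most $(1-c)k$ for an absolute constant $c>0$---but to swap the concluding tree-decomposition subroutine for one that uses only polynomial space, paying a larger base per unit of treewidth as the trade-off.

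First, I would replay the reduction of Theorem~\ref{thm:main-faster} verbatim up to the point where we hold an instance together with a tree decomposition of width $w \leq (1-c)k$. None of the ingredients used along the way---iterative compression over a size-$(k+1)$ feedback vertex set $W$, the Becker--Bienstock--Geiger style probabilistic branching on high-degree vertices outside $W$, and the structural lemma converting a bounded-average-degree feedback vertex set into a low-width tree decomposition---consumes more than polynomial space; the only exponential-space step in the proof of Theorem~\ref{thm:main-faster} is the closing Cut\&Count dynamic program, which moreover relies on fast matrix multiplication to push the base below $3$.

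Second, in place of that DP I would plug in a polynomial-space algorithm for FVS parameterized by treewidth of running time $\Os(\alpha^{w})$ for some constant $\alpha > 3$, and then retune the branching parameters so that the product of the branching cost and $\alpha^{(1-c)k}$ equals $2.8446^k$. The natural candidate for the inner subroutine is a polynomial-space evaluation of the Cut\&Count sum-of-products formula: rather than tabulating partial sums over the tree decomposition, one traverses the decomposition recursively and branches explicitly on the intersection of the FVS (and of the two cut sides of the remaining forest) with the current bag, which uses only polynomial space at the price of a larger base per bag. Standard boosting by $\poly(n)$ independent repetitions preserves an inverse-polynomial success probability and polynomial space overall.

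The main obstacle is the joint optimization of $c$, the branching probability (equivalently, the high-degree cutoff), and $\alpha$. In the exponential-space regime of Theorem~\ref{thm:main-faster} one enjoys $\alpha = 3^{\omega/3}$, or $\alpha=3$ without matrix multiplication, leaving room for a moderate $c$; in polynomial space $\alpha$ is strictly larger, which forces more aggressive branching (a larger $c$) and thereby inflates the branching exponent. The specific constant $2.8446$ emerges as the solution of the resulting convex optimization, so the bulk of the work in writing up the theorem is verifying that the three exponents---branching, treewidth reduction, and treewidth subroutine---can be balanced simultaneously at that value, while confirming that every subroutine in the pipeline (including the reductions of the compression step) indeed runs in polynomial space.
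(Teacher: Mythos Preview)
Your high-level scaffolding---reuse the randomized branching framework of \ref{alg:fvst} and swap the final subroutine for a polynomial-space one---is exactly what the paper does. The gap is in the swap itself.

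You assume that a polynomial-space replacement for the tree-decomposition DP must run in time $\Os(\alpha^{w})$ with $\alpha>3$, and that the constant $2.8446$ will fall out of re-balancing the parameters against this larger base. It will not. In the optimisation $c_\e=\max\{3-\e,\ \alpha^{1-2^{-\bar d}}\}$ with $\bar d=(4-2\e)/(1-\e)\ge 4$, one has $1-2^{-\bar d}\ge 15/16$, so any $\alpha>3$ forces $c_\e>3^{15/16}\approx 2.80$ only when $\alpha=3$; even $\alpha=3.1$ already gives $c_\e\gtrsim 2.89$. The value $2.8446$ is precisely the balance point for $\alpha=3$, so your plan, as stated, cannot reach it.

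The paper avoids this by \emph{not} running a generic polynomial-space treewidth routine at all. Instead it works directly with the separation $(A,B,S)$ of \Cref{lem:sep}: for each of the $3^{|S|}$ partitions of $S$ into $(F,L,R)$, it enumerates the $3^{|A\cap F|}$ partitions of $A\cap F$ and, independently, the $3^{|B\cap F|}$ partitions of $B\cap F$, and for each such partial assignment calls a polynomial-time, polynomial-space DP (\forestDP) over the forest $G-F$ to complete the Cut\&Count tally. Because $|S|+|A\cap F|$ and $|S|+|B\cap F|$ are both at most $(1-2^{-\bar d}+o(1))k$ by \Cref{lem:sep}, this matches the $\Os(3^{(1-2^{-\bar d})k})$ running time of the exponential-space \treewidthDP while using only polynomial space. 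That is the missing idea: exploit the very specific two-piece shape of the decomposition so that the Cut\&Count sum factors across $A$ and $B$ once $S$ is fixed, rather than recursing over a general tree decomposition and paying a larger base.
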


\paragraph{Our Techniques.}
We build upon the $\Os(3^\tw)$ time algorithm from~\cite{cutandcount-arxiv}.
The starting standard observation is that a feedback vertex set of size $k$ (which we can assume to be known to us by the iterative compression technique) gives a tree decomposition of treewidth $k+1$ with very special properties.
We show how to leverage these properties using the additional assumption that the average degree of all vertices in the feedback vertex set is constant:

\begin{restatable}{lemma}{improvedtw}\label{lem:improved-tw}
        Let $G$ be a graph and $F$ be a feedback vertex set of $G$ of size at most $k$, and define $\bar d := \deg(F)/k = \sum_{v\in F}\deg(v)/k$.
        There is an algorithm that, given $G$ and $F$, computes a tree decomposition of $G$ of width at most $(1-2^{-\bar d}+o(1))k$, and runs in polynomial time in expectation.
\end{restatable}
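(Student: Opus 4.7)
I plan to build the tree decomposition of $G$ by combining a width-$1$ decomposition of the forest $T := G - F$ with a random $2$-coloring of the vertices, and to exploit Jensen's inequality applied to $d \mapsto 2^{-d}$ to save approximately $k \cdot 2^{-\bar d}$ from the bag size. The standard construction places all of $F$ in every bag of a width-$1$ decomposition of $T$, giving a tree decomposition of $G$ of width $k+1$; the goal is to identify a set $L \subseteq F$ of ``localizable'' vertices that can each be confined to bags on only one side of a suitable partition.

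\textbf{Random step.} I would independently color each vertex of $V(G)$ with a uniformly random color in $\{0, 1\}$, and declare $v \in F$ \emph{localizable on side $c$} if all of $v$'s $G$-neighbors are colored $c$. Let $L_c$ be the set of vertices localizable on side $c$. The probability that a fixed $v$ is localizable on side $c$ equals $2^{-\deg_G(v)}$, so by Jensen's inequality applied to the convex function $d \mapsto 2^{-d}$,
\[
\mathbb{E}\bigl[\,|L_c|\,\bigr] \;=\; \sum_{v \in F} 2^{-\deg_G(v)} \;\ge\; k \cdot 2^{-\bar d} \qquad \text{for each } c \in \{0,1\}.
\]
Repeating polynomially many times, keeping the best outcome, and applying concentration yields in expected polynomial time a coloring with $\min(|L_0|, |L_1|) \ge (1 - o(1)) \, k \cdot 2^{-\bar d}$.

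\textbf{Assembly.} Given the coloring and $L := L_0 \cup L_1$, I would build the tree decomposition of $G$ so that each non-localizable vertex $v \in F \setminus L$ appears in every bag, while each localizable $v \in L_c$ appears only in bags on side $c$. The two sides are joined through a small central sub-tree of bags handling the boundary of the forest $T$ induced by the coloring; because $T$ is a forest, this boundary can be routed through a thin auxiliary sub-tree without inflating width by more than $o(k)$. The maximum bag size is at most
\[
|F \setminus L| + \max(|L_0|, |L_1|) + o(k) \;=\; k - \min(|L_0|, |L_1|) + o(k) \;\le\; (1 - 2^{-\bar d} + o(1))\, k.
\]

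\textbf{Main obstacle.} The most delicate issue is the interaction with edges internal to $F$: if $v \in L_0$ and $w \in L_1$ with $vw \in E(G)$, then $v$ and $w$ never share a bag, violating the tree decomposition property. I plan to address this by either (a) tightening the definition of $L$ so that all of $v$'s $F$-neighbors also respect the coloring (which changes the probability by only a constant factor), or (b) demoting one endpoint of every such ``bad'' edge from $L$ to $F \setminus L$; combined with a careful accounting that the expected number of bad edges is much smaller than $|L|$ in the regime of small $\bar d$, these losses can be kept at $o(k)$. Ensuring the connectedness of each $v \in L_c$'s bag set within side $c$---and in particular the careful routing of the boundary of $T$ through the central sub-tree---is the remaining technical step, and is what forces the $o(1)$ slack in the final width bound.
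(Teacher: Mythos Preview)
Your proposal has the right probabilistic intuition (Jensen on $d\mapsto 2^{-d}$, aiming to save $2^{-\bar d}k$ from each bag), but the assembly step contains a genuine gap.

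\textbf{The problem.} You color individual vertices of $G$ and then speak of ``bags on side $c$'' of the width-$1$ decomposition of $T=G-F$. But a random $2$-coloring of $V(T)$ does not split $T$ into two well-separated pieces: in expectation half of the forest edges are bichromatic, so the ``boundary'' you propose to route through a thin central sub-tree has $\Theta(|E(T)|)$ edges, not $o(k)$. Concretely, take $v\in L_0$ with forest neighbors $u_1,u_2$ (both colored $0$). The bags containing $v$ must form a connected subtree and must include bags at $u_1$ and at $u_2$; the tree-decomposition path between them can pass through arbitrarily many colour-$1$ bags. So colour-$1$ bags may be forced to contain many $L_0$-vertices, and your bag-size bound $k-\min(|L_0|,|L_1|)+o(k)$ is unjustified. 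The coloring simply does not induce a separation of $G$.

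\textbf{What the paper does instead.} The missing idea is to first spend $o(k)$ vertices to chop the forest into pieces that are small with respect to $F$: apply a balanced forest separator $S_\epsilon$ of size $\epsilon k=o(k)$ so that every component $C$ of $T-S_\epsilon$ satisfies $|E[C,F]|\le \bar d/\epsilon$. Now randomly colour the \emph{components} (and, to handle $F$--$F$ edges cleanly, one subdivision vertex per such edge) rather than individual vertices. Because whole components go to one side, the two sides together with $S$ (the undecided $F$-vertices plus $S_\epsilon$) form an honest separation $(A,B,S)$ of $G$, and one can take separate tree decompositions of $G[A\cup S]$ and $G[B\cup S]$ and glue them. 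The bounded degree of the auxiliary bipartite graph (components have degree $\le \bar d/\epsilon$ into $F$) is also exactly what makes the concentration argument go through via a bounded-degree dependency graph; your concentration claim under individual-vertex coloring would need a separate argument that you have not supplied.
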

To the best of our knowledge, Lemma~\ref{lem:improved-tw} is new even for the special case where $F$ is a vertex cover of $G$.
We expect this result to be useful for other problems parameterized by the feedback vertex set or vertex cover size (such parameterizations are studied in for example~\cite{DBLP:conf/ciac/JaffkeJ17}).
Lemma~\ref{lem:improved-tw} is proven via an application of the probabilistic method analyzed via proper colorings in a dependency graph of low average degree. It is presented in more detail in \Cref{sec:twandseps}.

Lemma~\ref{lem:improved-tw}, combined with the $\Os(3^\tw)$ time algorithm from~\cite{cutandcount-arxiv}, implies that we only need to ensure the feedback vertex set has constant average degree in order to get a $\Os((3-\e)^k)$ time algorithm for some $\e>0$.
To ensure this property, we extend the randomized $\Os(4^k)$ time algorithm of Becker et al.~\cite{DBLP:journals/jair/BeckerBG00}.
The algorithm from~\cite{DBLP:journals/jair/BeckerBG00} first applies a set of reduction rules exhaustively, and then selects a vertex with probability proportional to its degree.\footnote{The sampling is usually described as choosing a random edge and then a random vertex of this chosen edge, which has the same sampling distribution.} They show that this chosen vertex appears in an optimal  feedback vertex set with probability at least $1/4$.
To modify this algorithm, we observe that after applying the reduction rules in~\cite{DBLP:journals/jair/BeckerBG00}, every vertex has degree at least $3$, so one idea is to select vertices with probability proportional to $\deg(v)-3$ instead.\footnote{Let us assume that the graph is not $3$-regular, since if it were, then the feedback vertex set has constant average degree and we could proceed as before.}  
It turns out that if $n\gg k$, then this biases us more towards selecting a vertex in an optimal feedback vertex set $F$.
Indeed, we will show that if $n\ge4k$, then we succeed to select a vertex of $F$ with probability at least $1/2$.
This is much better than even success probability $1/3$, which is what we need to beat to improve the $\Os(3^k)$ running time.

Closer analysis of this process shows that even if $n<4k$, as long as the graph itself has large enough average degree, then we also get success probability $\gg 1/3$.
It follows that if the $\deg(v)-3$ sampling does not give success probability $\gg 1/3$, then the graph has $n\le4k$ and constant average degree. Therefore, the graph has only $O(k)$ edges, and even if all of them are incident to the feedback vertex set of size $k$, the feedback vertex set still has constant average degree.
Therefore, we can apply \Cref{lem:improved-tw}, which gives us a modest improvement of the $\Os(3^k)$ running time to $\Os(3^{(1-2^{-56})k})$ time.

\medskip

To obtain improvements to a $\Os(2.8446^k)$ time and polynomial space algorithm, we introduce the new case $n\ll3k$, where we simply add a random vertex to the FVS $F$, which clearly succeeds with probability $\gg1/3$. We then refine our analysis and apply the Cut\&Count method from the $\Os(3^{\tw})$ algorithm in a way similar to~\cite[Theorem~B.1]{cutandcount-arxiv}.

To obtain Theorem~\ref{thm:main-faster} and further improve the above running times, we extend the proof behind Lemma~\ref{lem:improved-tw} to decompose the graph using a ``three-way separation'' (see Definition~\ref{def:threewaysep}) and leverage such a decomposition by combining the Cut\&Count method with fast matrix multiplication. This idea to improve the running time is loosely inspired by previous approaches for MAX-SAT~\cite{DBLP:conf/sat/ChenS15} and connectivity problems parameterized by branch-width~\cite{DBLP:conf/iwpec/PinoBR16}.

\paragraph{Paper Organization.} This paper is organized as follows: We first define notation and list preliminaries in Section~\ref{sec:prel}.
We present the proof of Lemma~\ref{lem:improved-tw} in Section~\ref{sec:twandseps}.
In Section~\ref{sec:sample}, we introduce a probabilistic reduction rule and its analysis.
Subsequently we focus on improving the $\Os(3^k)$ time algorithm for FVS in Section~\ref{sec:3minuseps}.
The algorithm presented there only obtains a modest improvement, but illustrates our main ideas and uses previous results as a black box.

In the second half of the paper we show how to further improve our algorithms and prove our main theorems: Section~\ref{sec:improved} proves Theorem~\ref{thm:main}, and in Section~\ref{sec:mm} we prove \Cref{thm:main-faster}.
Both these sections rely on rather technical extensions of the Cut\&Count method that we postpone to Section~\ref{sec:cc} to improve readability.

\section{Preliminaries}\label{sec:prel}
Let $G$ be an undirected graph. For a vertex $v$ in $G$, $\deg(v)$ is the degree of $v$ in $G$, and for a set $S$ of vertices, we define $\deg(S):=\sum_{v\in S}\deg(v)$. If $S,T \subseteq V(G)$ we denote $E[S,T]$ for all edges intersecting both $S,T$, and denote $E[S]=E[T,T]$.
For a set $\binom{A}{\cdot,\cdot,\cdot}$ denotes all partitions of $A$ into three subsets.
As we only briefly use tree-decompositions we refer to~\cite[Chapter~7]{DBLP:books/sp/CyganFKLMPPS15} for its definitions and standard terminology.

\paragraph{Randomized Algorithms.}
All algorithms in this paper will be randomized algorithms for search problems with one-sided error-probability.
The \emph{(success) probability} of such an algorithm is the probability it will output the asked solution, if it exists.
In this paper we define \emph{with high probability} to be probability at least $1-2^{-c|x|}$ for some large $c$ where $x$ is the input, instead of the usual $1-1/|x|^c$. This is because FPT algorithms take more than simply $\poly(|x|)=O^\star(1)$ time, so a probability bound of $1-2^{-c|x|}$ is more convenient when using an union bound to bound the probability any execution of the algorithm will fail.

%%%BEGIN PREVIOUS VERSION
% As usual, we define \emph{with high probability} as with probability at least $1-1/|x|^c$ for some large constant $c$.
%%%END PREVIOUS VERSION

Note that if the algorithm has constant success probability, we can always boost it to high probability using $\Os(1)$ independent trials.
For convenience, we record the folklore observation that this even works for algorithms with expected running time:
\begin{lemma}[Folklore]\label{obs:boost}
        If a problem can be solved with success probability $1/S$ and in expected time $T$, and its solutions can be verified for correctness in polynomial time, then it can be also solved in $\Os(S\cdot T)$ time with high probability.
\end{lemma}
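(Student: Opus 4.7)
The plan is to combine standard amplification with verification, using Markov's inequality to control the running time. Call one execution of the given algorithm followed by the polynomial-time verification of its output a \emph{basic run}; it has expected time $O(T)$ (absorbing the polynomial verification cost into $T$, which we may assume dominates) and lets us detect when it actually produced a correct solution.

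First, I would group basic runs into \emph{batches} of size $N := \Theta(S \cdot |x|)$. A batch performs basic runs sequentially and returns the first verified solution, but aborts once the accumulated running time exceeds $\tau := 4NT$. By independence, the probability that all $N$ runs in a batch fail is at most $(1-1/S)^N \leq e^{-N/S} = 2^{-\Omega(|x|)}$. By Markov's inequality applied to the sum of $N$ i.i.d.\ times with mean $O(T)$, the probability that the batch is aborted on the time cap is at most $1/2$. Hence a batch succeeds with probability at least $1/3$ and deterministically finishes within time $\tau = O(ST \cdot |x|)$.

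Second, I would run $M := \Theta(|x|)$ independent batches and return the first verified solution produced. By independence, the probability that all $M$ batches fail is at most $(2/3)^M = 2^{-\Omega(|x|)}$, which meets the paper's definition of high probability. The total running time is deterministically at most $M \cdot \tau = O(ST \cdot |x|^2) = \Os(ST)$, since $\Os(\cdot)$ hides polynomial factors in $|x|$.

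The only subtlety worth flagging is that a bound only on the \emph{expected} running time of a single trial is too weak to plug into a Chernoff-style concentration inequality, so simply running $\Theta(S \cdot |x|)$ trials in one pass and appealing to Markov yields only a constant-probability bound on the total time. The two-level construction above circumvents this by using Markov at the inner (batch) level to control the running time and independent repetition at the outer level to amplify the success probability to the required $1 - 2^{-c|x|}$.
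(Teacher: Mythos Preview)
Your proof is correct and follows essentially the same two-level scheme as the paper: an inner phase of $\Theta(S|x|)$ trials capped via Markov's inequality to get constant success probability in $O^\star(ST)$ worst-case time, followed by $\Theta(|x|)$ independent repetitions to amplify to $1-2^{-\Omega(|x|)}$. Your write-up is in fact more careful than the paper's in making the two levels explicit and in flagging why a single application of Markov over all trials would only yield a constant-probability time bound.
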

\begin{proof}
Consider $cS|x|$ independent runs of the algorithm for some large constant $c$, and if a run outputs a solution, we then verify that solution and output YES if this is successful.
Given that a solution exists, it is not found and verified in any of $cS|x|$ rounds with probability at most $(1-1/S)^{c\cd S|x|} \leq \exp(-cn)$.
The expected running time of the $cS|x|$ independent runs is $c|x|ST$, and by Markov's inequality these jointly run in at most $2c|x|ST$ time with probability at least $3/4$.
Therefore we can terminate our algorithm after $2c|x|ST$ time and by a union bound this gives and algorithm that solves the problem with constant success probability. To boost this success probability to high probability, simply use $|x|$ independent runs of the algorithm that reaches constant success probability.
\end{proof}
%\begin{proof}
 %       Suppose we run the algorithm for $2ST$ steps, and if it terminates an outputs a solution, we then verify that solution. With probability $1/S$, the algorithm succeeds if run to completion, and by Markov's inequality, with probability at most $1/(2S)$, the algorithm takes more than $2ST$ steps to finish. By a union bound, with probability at least $1/(2S)$, the algorithm both succeeds and terminates before $2ST$ steps.
        
 %       Execute $c\cd Sn$ independent runs of the algorithm for large constant $c$, and report a verified solution if found.
        %A solution is found if it exists with probability $1-(1-1/S)^{c\cd Sn} \leq 1-\exp(cn)$.
%\end{proof}
Using this lemma, we assume that all randomized algorithms with constant positive success probability actually solve their respective problems with high probability.

\paragraph{Separations.} The following notion will be instrumental in our algorithms.
\BD[Separation]
Given a graph $G=(V,E)$, a partition $(A,B,S) \in \bn{V(G)}{\cd,\cd,\cd}$ of $V$ is a \emph{separation} if there are no edges between $A$ and $B$.
\ED

%\BD[FVS]
%The input is an unweighted, undirected graph $G$ on $n$ vertices, and a parameter $k\le n$. The goal is to either output a FVS of size at most $k$, or correctly conclude that one does not exist.
%\ED

\paragraph{Reduction Rules.}
In the context of parameterized complexity, a \emph{reduction rule} (for FVS) is a polynomial-time transformation of an input instance $(G,k)$ into a different instance $(G',k')$ such that $G$ has a FVS of size $k$ iff $G'$ has a FVS of size $k'$. We state below the standard reduction rules for FVS, as described in \cite{DBLP:books/sp/CyganFKLMPPS15}, Section 3.3. For simplicity, we group all four of their reduction rules FVS.1 to FVS.4 into a single one.

\BR[\cite{DBLP:books/sp/CyganFKLMPPS15}, folklore]\label{red:1} Apply the following rules exhaustively, until the remaining graph has no loops, only edges of multiplicity at most $2$, and minimum vertex degree at least $3$:
\BE
\im If there is a loop at a vertex $v$, delete $v$ from the graph and decrease $k$ by $1$; add $v$ to the output FVS.
\im If there is an edge of multiplicity larger than $2$, reduce its multiplicity to $2$.
\im If there is a vertex $v$ of degree at most $1$, delete $v$.
\im If there is a vertex $v$ of degree $2$, delete $v$ and connect its two neighbors by a new edge.
\EE
\ER

\section{Treewidth and Separators}
\label{sec:twandseps}

%In this case, neither \Cref{red:sample,red:uniform} nor the treewidth-based algorithm provides an improvement over $\Os(3^k)$. In addition, one can show that the value of $x$ that bottlenecks both \Cref{red:sample} and the treewidth algorithm is $x\approx 2k$. That is, if $n\approx3k$ and $x$ is bounded away from $2k$, then either \Cref{red:sample} succeeds with probability $1/(3-\e)$, or the treewidth algorithm runs in time $\Os(3^{(1-\e)k})$.

%Our approach is to improve the treewidth algorithm, constructing a tree decomposition of width $(1-\Om(1))k$.

In this section, we show how to convert an FVS with small average degree into a good tree decomposition. In particular, suppose graph $G$ has a FVS $F$ of size $k$ with $\deg(F) \le \bar d k$, where $\bar d=O(1)$. We show how to construct a tree decomposition of width $(1-\Om(1))k$. Note that a tree decomposition of width $k+1$ is trivial: since $G-F$ is a forest, we can take a tree decomposition of $G-F$ of width $1$ and add $F$ to each bag. To achieve treewidth $(1-\Om(1))k$, we will crucially use the fact that $\bar d=O(1)$.

We make the assumption that the algorithm already knows the small average degree FVS $F$. This reasoning may seem circular at first glance: after all, the whole task is finding the FVS in the first place. Nevertheless, we later show how to remove this assumption using the standard technique of \emph{Iterative Compression}.

We now present a high level outline of our approach. Our goal is to compute a small set $S$ of vertices---one of size at most $(1-\Om(1))k$---whose deletion leaves a graph of small enough treewidth. Then, taking the tree decomposition of $G-S$ and adding $S$ to each bag gives the desired tree decomposition. Of course, settling for $|S|=(1+o(1))k$ and treewidth $1$ is easy: simply set $S=F$ so that the remaining graph is a forest, which has treewidth $1$. Therefore, it is important that $|S|=(1-\Om(1))k$.

%For the set $S$ that we compute, the graph $G-S$ actually has more structure: it consists of two (not necessarily connected) components that are each forests plus a small number of additional vertices. That is, each component has an FVS of size $(2^{-\bar d}+o(1))k$. Clearly, each of the two components has treewidth $(2^{-\bar d}+o(1))k$, and therefore, so does their disjoint union.

We now proceed with our method of constructing $S$. First, temporarily remove the FVS $F$ from the graph, leaving a forest $T$. We first select a set $S_\e$ of $\be$ vertices to remove from the forest, for some $\be=o(k)$, to break it into connected components such that the edges between $F$ and $T$ are evenly split among the components. More precisely, we want every connected component of $T-S_\e$ to share at most a $1/\be$ fraction of all edges between $F$ and $T$; we show in \Cref{lem:tree} below that this is always possible. The $\be$ vertices in $S_\e$ will eventually go into every bag in the decomposition; this only increases the treewidth by $o(k)$, which is negligible. Hence, we can safely ignore the set $S_\e$.

Next, we perform a \emph{random coloring procedure} as follows: randomly color every connected component of $T-S_\e$ red or blue, uniformly and independently. Let $A$ be the union of all components colored red, and $B$ be the union of all components colored blue. For simplicity of exposition, we will assume here (\emph{with} loss of generality) that $F$ is an independent set: that is, there are no edges between vertices in the FVS. Then, if a vertex $v\in F$ has all its neighbors in $T-S_\e$ belonging to red components, then $v$ only has neighbors in $A$, so let us add $v$ to $A$. Similarly, if all neighbors belong to blue components, then $v$ only has neighbors in $B$, so let us add $v$ to $B$. Observe that the new graphs $G[A]$ and $G[B]$ still have no edges between them, so every vertex addition so far has been ``safe''.

What is the probability that a vertex in $F$ joins $A$ or $B$? Recall that $d(F) = \bar d k$, and since $F$ is an independent set, $|E[F,T-S_\e]| \le |E[F,T]| = d(F) = \bar d k$. If a vertex in $F$ has exactly $\bar d$ edges to $T-S_\e$, then it has probability at least $2^{-\bar d}$ of joining $A$, with equality when all of these edges go to different connected components in $T-S_\e$. Of course, we only have that vertices in $F$ have at most $\bar d$ neighbors on average, but a convexity argument shows that in expectation, at least a $(2^{-\bar d}-o(1))k$ fraction of vertices in $F$ join $A$. That is, $\E[|A\cap F|]\ge(2^{-\bar d}-o(1))k$.  We can make a symmetric argument for vertices joining $B$. Of course, we need both events---enough vertices joining each of $A$ and $B$---to hold simultaneously, which we handle with a concentration argument. From here, it is straightforward to finish the treewidth construction. We now present the formal proofs.

%Note that the treewidth of $G[A]$ is now at most $|F\cap A|+1$, the treewidth of $G[B]$ is at most $|F\cap B|+1$, and there are $k-|F\cap A|-|F\cap B|$ vertices in $F$ not added to either $A$ or $B$.
% In such a case,  Thus, both $A$ and $B$ have treewidth at least $(2^{-\bar d}-o(1))k$.

We begin with the following standard fact on balanced separators of forests:

\BL\label{lem:tree}
Given a forest $T$ on $n$ vertices with vertex weights $w(v)$, for any $\beta>0$, we can delete a set $S$ of $\beta $ vertices so that every connected component of $T-S$ has total weight at most $w(V)/\beta$.
\EL
\BP
Root every component of the forest $T$ at an arbitrary vertex. Iteratively select a vertex $v$ of maximal depth whose subtree has total weight more than $w(V)/\beta$, and then remove $v$ and its subtree. The subtrees rooted at the children of $v$ have total weight at most $w(V)/\beta$, since otherwise, $v$ would not satisfy the maximal depth condition. Moreover, by removing the subtree rooted at $v$, we remove at least $w(V)/\beta$ total weight, and this can only happen $\be$ times.
\EP

\BL[Small Separator]\label{lem:sep}
Given an instance $(G,k)$ and a FVS $F$ of $G$ of size at most $k$, define $\bar d := \deg(F)/k$, and suppose that $\bar d=O(1)$. % and suppose that $\bar d=\Th(1)$ (independent of $k$).
There is a randomized algorithm running in expected polynomial time that computes a separation $(A,B,S)$ of $G$ such that:
\BE
\im $|A\cap F|,|B\cap F|\ge(2^{-\bar d}-o(1))k$
\im $|S|\le(1+o(1))k - |A\cap F| - |B\cap F|$
% UNNECESSARY % \im $|A|,|B| \le (1/2+o(1))|\bar F| + (2^{-\bar d}+o(1))k$
\EE
\EL

\BP
%Fix a parameter $\e>0$; it suffices to prove the claim with every occurrence of $o(1)$ replaced by $O(\e)$.

Fix a parameter $\e:=k^{-0.01}$ throughout the proof. Apply \Cref{lem:tree} to the forest $G-F$ with parameter $\e k$, with vertex $v$ weighted by $|E[v,F]|$, and let $S_\e$ be the output. Observe that
\[|S_\e| \le\e k=o(k) ,\]
and every connected component $C$ of $G-F-S_\e$ satisfies
\[ |E[C,F]|\le \f{|E[\bar F,F]|}{\e k} \le \f{\deg(F)}{\e k} = \f{\bar dk}{\e k} = \bar d/\e .\]

Now form a bipartite graph $H$ on vertex bipartition $F\uplus R$, where $F$ is the FVS, and there are two types of vertices in $R$, the \emph{component} vertices and the \emph{subdivision} vertices. For every connected component $C$ in $G-F-S_\e$, there is a component vertex $v_C$ in $R$ that represents that component, and it is connected to all vertices in $F$ adjacent to at least one vertex in $C$. For every edge $e=(u,v)$ in $E[F]$, there is a vertex $v_e$ in $R$ with $u$ and $v$ as its neighbors. Observe that (1) $|R| \le |E[\bar F,F]| + 2|E[F]| = \deg(F)$, (2) every vertex in $R$ has degree at most $\bar d/\e$, and (3) the degree of a vertex $v\in F$ in $H$ is at most $\deg(v)$.

The algorithm that finds a separator works as follows. For each vertex in $R$, color it red or blue uniformly and independently at random. Every component $C$ in $G-F-S_\e$ whose vertex $v_C$ is colored red is added to $A$ in the separation $(A,B,S)$, and every component whose vertex $v_C$ is colored blue is added to $B$. Every vertex in $F$ whose neighbors are all colored red joins $A$, and every vertex in $F$ whose neighbors are all colored blue joins $B$. The remaining vertices in $F$, along with the vertices in $S_\e$, comprise $S$.

\BSCL\label{clm:sep1}
$(A,B,S)$ is a separation.
\ESCL
\begin{subproof}[]
        Suppose for contradiction that there is an edge connecting $A$ and $B$. The edge cannot connect two distinct components of $G-F-S_\e$, so it must have an endpoint in $F$. The edge cannot connect a vertex in $F$ to a vertex in $G-F-S_\e$, since a vertex in $F$ only joins $A$ or $B$ if all of its neighbors in $R$ are colored the corresponding color. Therefore, the edge $e$ must connect two vertices in $F$. But then, $v_e$ connects to both endpoints and is colored either red or blue, so it is impossible for one endpoint of $e$ to have all neighbors colored red, and the other endpoint to have all neighbors colored blue, contradiction.
\end{subproof}

We now show that with good probability both Conditions (1) and (2) hold. The algorithm can then repeat the process until both conditions hold.

%\para{Condition (2).}
%This follows from a simple concentration bound. \alert{TODO. Also, condition (2) is not used anywhere yet, but will probably be useful to obtain a better runtime.}

\BSCL\label{clm:cond1}
With probability at least $1-1/\poly(k)$, Condition (1) holds for $(A,B,S)$.
\ESCL
\begin{subproof}[]
        There are at most $\e|F|$ vertices in $F$ with degree at least $\overline d/\e$. Since they cannot affect condition (1) by an additive $\e|F|\le \e  k=o(k)$ factor, we can simply ignore them; let $F'$ be the vertices with degree at most $\bar d/\e$. Consider the \emph{intersection graph} $I$ on the vertices of $F'$, formed by connecting two vertices in $F'$ iff they share a common neighbor (in $R$). Since every vertex in $F'$ and $C$ has degree at most $\bar d/\e$, the maximum degree of $I$ is $(\bar d/\e)^2$. Using the standard greedy algorithm, we color $F'$ with $(\bar d/\e)^2+1$ colors so that every color class forms an independent set in $I$. In particular, within each color class, the outcome of each vertex---namely, whether it joins $A$ or $B$ or $S$---is independent across vertices.
        
        %%%%%%%%%%%%%%%%%%%%%%%%%%%%%%%%%%%%%%%
        %%% New version with changes by Jesper
        %%%%%%%%%%%%%%%%%%%%%%%%%%%%%%%%%%%%%%%
        
        Let $F'_i$ be the vertices colored $i$. If $|F_i'| < k^{0.9}$, then ignore it; since $\bar d\le O(1)$ and $\e=k^{-0.01}$, the sum of all such $|F'_i|$ is at most $((\bar d/\e)^2+1)k^{0.9}=o(k)$, so they only affect condition (1) by an additive $o(k)$ factor. Henceforth, assume that $|F'_i|\ge k^{0.9}$. Each vertex $v\in F'_i$ has at most $\deg(v)$ neighbors in $H$, so it has independent probability at least $2^{-\deg(v)}$ of joining $A$. Let $X_i:=|F'_i\cap A|$ be the number of vertices in $F_i'$ that join $A$; by Hoeffding's inequality\footnote{If $a_1,\ldots,a_n$ are independent and Bernoulli and $X=a_1+a_2+\ldots+a_n$, then $\Pr[|X-E[x]| \geq t] \leq 2\exp(-2t^2/n)$.},
        \begin{align*}
         \Pr[ X_i \le E[X]-k^{0.8} ] &\le 2 \exp ( -2 \cd (k^{0.8})^2/ |F'_i| )\\
         &\le 2\exp(-2 \cd k^{0.6}) \le 1/\poly(k)
        \end{align*} 
        for large enough $k$.
        
        By a union bound over all $\le k^{0.1}$ color classes $F'_i$ with $|F'_i| \geq k^{0.9}$, the probability that $|F'_i\cap A|\ge E[|F'_i\cap A|]-k^{0.8}$ for each $F'_i$ is $1-1/\poly(k)$. In this case,
        \begin{align*}
        |F\cap A| &\ge \sum_{i: |F'_i| \geq k^{0.9}}\lp E[|F'_i\cap A|]-k^{0.8}\rp
        \\&\ge \sum_{i: |F'_i| \geq k^{0.9}}\sum_{v\in F'_i} 2^{-\deg(v)}-k^{0.1} \cd k^{0.8}
        \\&= \sum_{v\in F'}2^{-\deg(v)}-o(k)
        \\&\ge |F'|\cd 2^{-\deg(F')/|F'|}-o(k)
        ,\end{align*}
        where the last inequality follows from convexity of the function $2^{-x}$. Recall that $|F'|\ge(1-o(1))k$, and observe that $\deg(F')/|F'| \le \deg(F)/|F|=\bar d$ since the vertices in $F\bs F'$ are precisely those with degree exceeding some threshold. It %follows that
        \[ |F\cap A| \ge (1-o(1))k \cd 2^{-\bar d} ,\]
        proving condition (1) for $|A\cap F|$. Of course, the argument for $|B\cap F|$ is symmetric.
 \end{subproof}

\BSCL
With probability at least $1-1/\poly(k)$, Condition (2) holds for $(A,B,S)$.
\ESCL
\begin{subproof}[]
        At most $\e k=o(k)$ vertices in $S$ can come from $S_\e$, and the other vertices in $S$ must be precisely $F \bs ((A\cap F) \cup (B\cap F))$, which has size $k-|A\cap F|-|B\cap F|$. %\Cref{clm:cond1}, at least $2\cd (2^{-\bar d}-o(1))k$ many vertices from $F$ fall into $A$ or $B$. This leaves at most $(1-2\cd2^{-\bar d}+o(1))k$ many vertices that can possibly join $S$.
\end{subproof}
Hence, with at least constant probability, both Conditions~(1)~and~(2) hold. Furthermore, whether or not they hold can be checked in polynomial time, so the algorithm can repeatedly run the algorithm until the separation satisfies both conditions.
\EP

\improvedtw*
\BP
Compute a separation $(A,B,S)$ following \Cref{lem:sep}. %Conditions (1) and (2) are easily checked in polynomial time, so if one of them fails to hold, then repeatedly compute a separation until they both hold. 
Since $(A\cap F) \cup S$ is a FVS of $A\cup S$ of size $(1-2^{-\bar d}+o(1))k$, we can compute a tree decomposition of $G[(A\cap F)\cup S]$ of width $(1-2^{-\bar d}+o(1))k$ as follows: start with a tree decomposition of width $1$ of the forest $G[(A\cap F)\cup S]-(F\cup S)$, and then add all vertices in $(A\cap F)\cup S$ to each bag. Similarly, compute a tree decomposition of $G[(B\cap F)\cup S]$ in the same way. Finally, merge the two tree decompositions by adding an edge between an arbitrary node from each decomposition; since there is no edge connecting $A$ to $B$, the result is a valid tree decomposition.
\EP

%\jes{Probably the `simple algorithm section' should begin here}

\section{Probabilistic Reduction}\label{sec:sample}

Whenever a reduction fails with a certain probability, we call it a \emph{probabilistic reduction}. Our probabilistic reduction is inspired by the randomized $\Os(4^k)$ FVS algorithm of \cite{DBLP:journals/jair/BeckerBG00}.
Whenever we introduce a probabilistic reduction, we include (P) in the header, such as in the reduction below.

\BR[P]\label{red:sample}
Assume that \Cref{red:1} does not apply and $G$ has a vertex of degree at least $4$. Sample a vertex $v\in V$ proportional to $w(v):=(\deg(v)-3)$. That is, select each vertex $v$ with probability $w(v)/w(V)$. Delete $v$, decrease $k$ by $1$.
\ER
We say a probabilistic reduction \emph{succeeds} if it selects a vertex in an optimal feedback vertex set.
\BO\label{obs:deg}
Let $G$ be a graph $F$ a FVS of $G$. Denoting $\bar{F}:= V \setminus F$ we have that
\begin{gather} \deg(\bar F) \le \deg(F) + 2(|\bar F|-1) .\label{1}\end{gather}
\EO
\BP
Since $G-F$ is a forest, there can be at most $|\bar F|-1$ edges in $G-F$, each of which contributes $2$ to the summation $\deg(\bar F)=\sum_{v\in\bar F}\deg(v)$. The only other edges contributing to $\deg(\bar F)$ are in $E[F,\bar F]$, which contribute $1$ to both $\deg(\bar F)$ and $\deg(F)$. Therefore,
\[ \deg(\bar F) \le 2(|\bar F|-1) + |E[F,\bar F]| \le 2(|\bar F|-1) + \deg(F) .\]
% Every other edge is either an edge in $G[F]$, which contributes $2$ to $\deg(F)$ and $0$ to $\deg(\bar F)$, or it is an edge between $\bar F$ and $F$, which contributes $1$ to both $\deg(\bar F)$ and $\deg(F)$. Since the edges outside of $G-F$ can only contribute more to $\deg(F)$ than to $\deg(\bar F)$, the claim follows.
\EP

\BL\label{lem:sample}
If $n\ge 4k$ and the instance is feasible, then \Cref{red:sample} succeeds with probability at least $1/2$.
\EL
\BP
Let $F\s V$ be a FVS of size $k$.\footnote{From any FVS of size less than $k$, we can arbitrarily add vertices until it has size $k$.} We show that the probability of selecting a vertex in $F$ is at least $1/2$. Define $\bar F := V\bs F$, so that our goal is equivalent to showing that $w(F) \ge w(\bar F)$.

The value of $w(F)$ can be rewritten as
\begin{gather} w(F)=\sum_{v\in F}(\deg(v)-3)= \deg(F)-3|F|.\label{2} \end{gather}
By \Cref{obs:deg}, 
%Since $G-F$ is a forest, there can be at most $|\bar F|-1$ edges in $G-F$, each of which contributes $2$ to the summation of $\deg(\bar F)$. Every other edge contributing to $\deg(\bar F)$ is an edge between $\bar F$ and $F$, so it contributes $1$ to both $\deg(\bar F)$ and $\deg(F)$. Therefore,
\begin{gather} w(\bar F)=\sum_{v\in \bar F}(\deg(v)-3) = \deg(\bar F)-3|\bar F| \stackrel{(\ref1)}\le \deg(F) + 2(|\bar F|-1) - 3|\bar F| \le \deg(F)-|\bar F| . \label{3}\end{gather}
Therefore,
\begin{align}
w(F)\ge w(\bar F) &\impliedby \deg(F)-3|F| \ge \deg(F)-|\bar F| \label{4'}
\\&\iff |\bar F|\ge 3|F| \nonumber
\\&\iff n\ge4k. \nonumber
\end{align}
\EP

Therefore, as long as $n\ge4k$, we can repeatedly apply \Cref{red:1,red:sample} until either $k=0$, which means we have succeeded with probability at least $1/2^k$, or we have an instance $(G,k)$ with $n\le 4k$.

Later on, we will need the following bound based on the number of edges $m$. Informally, it says that as long as the average degree is large enough, \Cref{red:sample} will still succeed with probability close to $1/2$ (even if $n < 4k$).% \jl{I put $n>4k$ here instead.}
\BL\label{lem:samplem}
Assume that $2m>3n$.
If the instance is feasible, then \Cref{red:sample} succeeds with probability at least $\min\{\frac12,\f{m-n-2k}{2m-3n}\}$.
\EL
\BP
There are at most $|\bar F|-1$ edges not contributing to $\deg(F)$, so
\begin{gather}
 m \le (|\bar F|-1) + \deg(F) \le (n-k) + \deg(F) \implies \deg(F) \ge m-n+k.  \label{5''}
\end{gather}
If $w(F)/w(\bar F)\ge1$, then the success probability is at least $1/2$, so assume otherwise that $w(F)< w(\bar F)$. Following the proof of \Cref{lem:sample}, the contrapositive of (\ref{4'}) gives
\begin{gather}
w(F)< w(\bar F) \implies |\bar F|< 3|F| , \label{5'}
\end{gather}
so we have
\[ \f{w(F)}{w(\bar F)} \stackrel{(\ref2)}= \f{\deg(F)-3|F|}{w(\bar F)} \stackrel{(\ref3)}\ge  \f{\deg(F)-3|F|}{\deg( F)-|\bar F|}\stackrel{(\ref{5''},\,\ref{5'})}\ge \f{(m-n+k)-3|F|}{(m-n+k)-|\bar F|} = \f{m-n-2k}{m-2n+2k}.\]
Finally, as the Lemma statement is vacuous when $2k > m-n$, the Lemma follows.
%\jes{I didn't get what happened in the denominator actually. Maybe change to: $w(\bar F) \leq \deg(\bar F)-3|\bar F|\leq m+(n-k)-3(n-k)$, where we use $|\deg(\bar F)|\leq m+ |\bar F|$?} \jl{I\ used (\ref3) from an earlier proof. I changed it so it's clear now.}
\EP

\section{$\Os((3-\e)^k)$ Time Algorithm}\label{sec:3minuseps}

In this section we present our simplest algorithm that achieves a running time of $\Os((3-\e)^k)$, for some $\e>0$.
The improvement $\e$ is very small, but we found this to be the simplest exposition that achieves the bound for any $\e>0$.
We build on the following result:

\begin{lemma}[Cygan et al.~\cite{cutandcount-arxiv}]\label{lem:treewidthdp}
        There is an algorithm \treewidthDP that, given a tree decomposition of the input graph of width $\tw$, and parameter $k$ outputs a FVS of size at most $k$ with high probability if it exists. Moreover, the algorithm runs in $\Os(3^\tw)$ time.
\end{lemma}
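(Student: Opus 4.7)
The plan is to invoke the Cut\&Count framework of Cygan et al.\ directly on the tree decomposition. The first step is to reformulate the search problem: finding a FVS of size at most $k$ is equivalent to finding an induced forest on at least $n-k$ vertices. A subgraph $H=(U,E[U])$ is a forest if and only if $|E[U]| = |U| - c(H)$, where $c(H)$ is the number of connected components of $H$. So we can parametrize by the pair $(|U|,|E[U]|)$ and, for each such pair that is consistent with being a forest, we need to decide whether there exists an induced subgraph on $U$ with exactly $|U|-|E[U]|$ connected components. This reduces FVS to a connectivity-type counting problem, which is exactly what Cut\&Count is designed for.

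Next, I would apply the isolation lemma by assigning random polynomial-bit weights to the vertices, so that with high probability there is a unique minimum-weight induced forest of each target size. Then the cut part of Cut\&Count assigns each vertex of the candidate forest $U$ to one of two sides, \emph{left} or \emph{right}, subject to the constraint that no edge of $G[U]$ crosses the cut; the remaining vertices are \emph{deleted} (they constitute the FVS). For a forest with $c$ components there are exactly $2^c$ consistent cuts, so counting (weighted) consistent cut-assignments modulo $2$ isolates the contribution of the unique minimum solution with the correct number of components, and therefore detects whether a FVS of size $k$ exists.

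The algorithmic content is then a standard bag-DP over the tree decomposition with state space $\{\textup{FVS},\textup{left},\textup{right}\}$ per vertex of the current bag, giving $3^{\tw+O(1)}$ states; edges inside the bag are checked against the cut constraint, and introduce/forget/join nodes are handled in the usual way while also tracking $|U|$, $|E[U]|$ and the total weight modulo a suitable modulus. Each bag operation takes $\Os(3^{\tw})$ time, so the entire DP runs in $\Os(3^{\tw})$ time. Repeating the whole procedure $\Os(1)$ times and applying \Cref{obs:boost} (boosting via verification of the produced candidate FVS) yields the high-probability guarantee.

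The main obstacle, and the reason I would simply cite~\cite{cutandcount-arxiv}, is that correctly implementing the join-bag transition while simultaneously tracking the size of $U$, the number of edges, and the weighted count modulo $2$ requires a careful but routine bookkeeping to avoid double-counting cut-assignments across the two subtrees; the key invariant is that at a join bag one convolves the two partial tables over the triple $(|U|,|E[U]|,\textup{weight})$ restricted to fixed bag-colorings, and verifies that at the root the relevant parity is nonzero. This is precisely the content of~\cite{cutandcount-arxiv}, so in a self-contained write-up I would phrase the above as a direct black-box application rather than redoing the DP from scratch.
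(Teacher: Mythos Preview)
Your proposal is correct, and in fact it contains more than the paper does: the paper does not prove \Cref{lem:treewidthdp} at all but simply cites it as a black box from~\cite{cutandcount-arxiv}. Your sketch of the Cut\&Count reduction (induced-forest reformulation, isolation lemma, counting consistent two-colorings modulo a power of two, and a three-state bag DP) is an accurate summary of how that cited result is obtained, so your plan to phrase it as a direct black-box application is exactly what the paper does.
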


First, we combine the tree decomposition from the previous section with the standard technique of \emph{Iterative Compression} to build an algorithm that runs in time $\Os((3-\e)^k)$ time, assuming that $m=O(k)$ (recall $m$ denotes the number of edges of the input graph).
Then, we argue that by applying \Cref{red:sample} whenever $m\gg k$, we can essentially ``reduce'' to the case $m=O(k)$. Combining these two ideas gives us the $\Os((3-\e)^k)$ algorithm.

The algorithm is introduced below in pseudocode. The iterative compression framework proceeds as follows. We start with the empty graph, and add the vertices of $G$ one by one, while always maintaining a FVS of size at most $k$ in the current graph. Maintaining a FVS of the current graph allows us to use the small tree decomposition procedure of \Cref{sec:twandseps}. Then, we add the next vertex in the ordering to each bag in the tree decomposition, and then solve for a new FVS in $O^\star(3^\tw)$ time using \Cref{lem:treewidthdp}. Of course, if there is no FVS of size $k$ in the new graph, then there is no such FVS in $G$ either, so the algorithm can terminate early.

%In order to remove the condition of knowing a FVS of size $k$ beforehand, we apply the standard technique of \emph{iterative compression}. 
\begin{algorithm}[H]
\mylabel{alg:ic}{\texttt{IC1}}
\caption{\ref{alg:ic}$(G,k)$}
\small
\textbf{Input}: Graph $G=(V,E)$ and parameter $k$, with $m=O(k)$. \\
\textbf{Output}: FVS $F$ of size at most $k$, or \Inf if none exists. 
%\textbf{Runtime}: $\Os(3^{(1-2^{-2m/k}+o(1))k})$
\begin{algorithmic}[1]
\State Order the vertices $V$ arbitrarily as $(v_1,\lds,v_n)$\label{line:ic1}
\State $F\gets\emptyset$
\For {$i=1,\lds,n$} \Comment{\textbf{Invariant:} $F$ is a FVS of $G[\{v_1,\lds,v_{i-1}\}]$}
 \State Compute a tree decomposition of $G[\{v_1,\lds,v_{i-1}\}]$ by applying \Cref{lem:improved-tw} on input $F$\label{line:ic4}
 \State Add $v_i$ to each bag in the tree decomposition\label{line:ic5}
 \State $F\gets $ a FVS of $G[\{v_1,\lds,v_i\}]$ with parameter $k$, computed using \treewidthDP\label{line:ic6} from Lemma~\ref{lem:treewidthdp}
 \If{$F$ is \Inf}
  \State \Return \Inf
 \EndIf
\EndFor
\State \Return $F$
\end{algorithmic}
\end{algorithm}

\BL\label{cor:tw-alg}
On input instance $(G,k)$ with $m=O(k)$, $\ref{alg:ic}(G,k)$ runs in time  $\Os(3^{(1-2^{-2m/k}+o(1))k})$. Moreover, if there exists a FVS $F$ of size at most $k$, then \ref{alg:ic} will return a FVS of size at most $k$ with high probability.
\EL
\BP
Suppose that there exists a FVS $F^*$ of size at most $k$. Let $(v_1,\lds,v_n)$ be the ordering from Line~\ref{line:ic1}, and define $V_i:=\{v_1,\lds,v_i\}$. Observe that $F^*\cap V_i$ is a FVS of $G[V_i]$, so the FVS problem on Line~\ref{line:ic6} is feasible. By Lemma~\ref{lem:treewidthdp}, Line~\ref{line:ic6} correctly computes a FVS with high probability on any given iteration. Therefore, after using $O^*(1)$ independent trials,  with high probability a FVS is returned successfully.

We now bound the running time. On Line~\ref{line:ic4}, the current set $F$ is a FVS of $G[V_{i-1}]$. To bound the value of $\bar d$ used in \Cref{lem:improved-tw}, we use the (rather crude) bound
\begin{gather*}
 \deg(F) \le \deg(V) = 2m \implies \bar d=\f{\deg(F)}k \le \f{2m}k , %\label{eq:crude}
\end{gather*}
 and moreover, $\bar d=O(1)$ since $m=O(k)$ by assumption. Therefore, \Cref{lem:improved-tw} guarantees a tree decomposition of width at most $(1-2^{-2m/k}+o(1))k$, and adding $v_i$ to each bag on Line~\ref{line:ic5} increases the width by at most $1$. By Lemma~\ref{lem:treewidthdp}, Line~\ref{line:ic6} runs in time $\Os(3^{(1-2^{-2m/k}+o(1))k})$ time, as desired.
\EP

We now claim below that if $m\ge\Om(k)$ for a sufficiently large $k$, then \Cref{red:sample} succeeds with good probability (in particular, with probability greater than $1/3$).
\BL\label{lem:m28k}
If $G$ has a FVS of size $k$ and $m\ge28k$, then
\Cref{red:sample} succeeds with probability at least $4/11$.
\EL
\BP
We consider two cases.
If $n \geq 4k$, then the success probability is at least $1/2$ by \Cref{lem:sample}. Otherwise, if $n\le4k$, then $m\ge28k\ge7n$, and \Cref{lem:samplem} and the trivial bound $k\le n$ give a success probability of at least
\[ \f{m-n-2k}{2m-3n} \ge \f{m-3n}{2m-3n} \ge \f{7n-3n}{14n-3n}=\f4{11} .\]
Hence, regardless of whether or not $n\ge4k$, \Cref{red:sample} succeeds with probability at least $4/11$.
\EP

Below is the full randomized algorithm in pseudocode, which combines \Cref{red:1,red:sample} with the iterative compression routine \ref{alg:ic}.
After a trivial check and reduction rule, Line~\ref{line:coinflip} flips a coin that needs to be flipped \Heads in order to proceed to the iterative compression step.

The motivation for this is that we want each iteration of \ref{alg:fvs1t} to run quickly in expectation---in particular, in $\Os(3^{o(k)})$ time---for simplicity of analysis.
This way, if the algorithm has success probability $c^{-k}$ for some constant $c$, then we can repeat it $\Os(c^k)$ times, succeeding with high probability and taking $\Os(c^{(1+o(1))k})$ time in expectation. Since \ref{alg:ic} takes $\Os(3^{(1-2^{-56}+o(1))k})$ time by \Cref{cor:tw-alg}, we should call \ref{alg:ic} with probability at most $3^{-(1-2^{-56})k}$, which is exactly the probability of the coin flipping \Heads.% the coin should correspondingly flip \Heads with probability $3^{-(1-2^{-56})k$.

\begin{algorithm}[H]
\mylabel{alg:fvs1t}{\texttt{FVS1}}
\caption{\ref{alg:fvs1t}$(G,k)$}
\small
\textbf{Input}: Graph $G=(V,E)$ and parameter $k\le n$. \\
\textbf{Output}: A FVS of size $k$ with probability $3^{-(1-2^{-56})}$ if one exists; \Inf otherwise. 
%\textbf{Runtime}: $\poly(n,k)$ in expectation (\Cref{clm:time1})
\begin{algorithmic}[1]
%\For {$T := \Th ( 3^{(1-2^{-\bar d}+\e)k} \log n)$ independent iterations} \Comment{Every iteration is called a \emph{trial}}
%  \State $G' \gets G$, $k'\gets k$, $F'\gets\emptyset$
%  \State $k_0\gets$ a large enough positive constant
\LineIf{$k=0$}{\Return $\emptyset$ if $G$ is acyclic, and \Return \Inf otherwise}
  %\While {$k'>k_0$}
  \State Exhaustively apply \Cref{red:1} to $(G,k)$ to get vertex set $F$ and instance $(G',k')$ with $m'$ edges
%    \State $\bar d\gets (4-2\e)/(1-\e)$
  \State Flip a coin with \Heads probability $3^{-(1-2^{-56})k'}$ \label{line:coinflip}
  \If {$m'\le28k'$ and coin flipped \Heads}
    \State $F'\gets \ref{alg:ic}(G',k')$ \label{line:ic}
  \Else
    \State Apply \Cref{red:sample} to $(G',k')$ to get vertex $v\in V$ and instance $(G'',k'-1)$
    \State $F'\gets \ref{alg:fvs1t}(G'',k'-1)\cup\{v\}$\Comment{$\Inf\cup S=\Inf$ for any set $S$} \label{line:rec1}
%        \If {$F'$ is not \Inf}
%          \State $F'\gets F'\cup\{v\}$
%        \EndIf
  \EndIf
  \State \Return $F\cup F'$
%      \If {$F'$ is not \Inf}
%        \State \Return $F\cup F'$
%      \Else
%        \State \Return \Inf
%      \EndIf
  %\EndWhile
%\EndFor
\end{algorithmic}
\end{algorithm}

%\jes{Add intuition here; we use strange coin flip to prepare reader for next section where this is needed} \jl{Added some intuition. What do you think?}

\BL\label{clm:time1}
$\ref{alg:fvs1t}(G,k)$ runs in expected $\Os(3^{o(k)})$ time and has $\Omega(3^{-(1-2^{-56})k})$ success probability.
\EL
\BP
For the running time, 
the computation outside of Line~\ref{line:ic} clearly takes $\poly(n)$ time. For each $k'\in(k_0,k]$, Line~\ref{line:ic} is executed with probability $3^{-(1-2^{-56})k'}$ and takes $\Os(3^{(1-2^{-56}+o(1))k'})$ time, so in expectation, the total computation cost of Line~\ref{line:ic} is $\Os(2^{o(k)})$ per value of $k'$, and also $\Os(2^{o(k)})$ overall.

It remains to lower bound the success probability.
Define $c:=3^{1-2^{-56}}$. We will prove by induction on $k$ that $\ref{alg:fvs1t}(G,k)$ succeeds with probability at least $c^{-k}/2$.
This statement is trivial for $k=0$, since no probabilistic reductions are used and $\ref{alg:fvs1t}(G,k)$ succeeds with probability $1$. For the inductive step, consider an instance $\ref{alg:fvs1t}(G,k+1)$. First, suppose that $m\le28k$. In this case, if \ref{alg:ic} in Line~\ref{line:ic} is executed, then it will run in time $\Os(3^{(1-2^{-2m/k}+o(1))k})$ by \Cref{cor:tw-alg}, and correctly output a FVS $F$ of size at most $k$, with high probability. This happens with probability at least
\[ 3^{-(1-2^{-56})k} \cd \lp 1-\f1{\poly(n)}\rp \ge c^{-k} \cd \f12, \]
as desired. If \ref{alg:ic} is not executed, then \ref{alg:fvs1t} can still succeed, but this only increases our overall success probability, so we disregard it.

Otherwise, suppose that $m>28k$. Then, by \Cref{lem:m28k}, applying \Cref{red:sample} succeeds with probability at least $4/11$. By induction, the recursive call on Line~\ref{line:rec1} succeeds with probability at least $c^{-(k-1)}/2$, so the overall probability of success is at least
\[ \f4{11} \cd\f{c ^{-(k-1)}}{2} \ge c^{-1} \cd\f{c ^{-(k-1)}}{2} = \f{c^{-k}}2 ,\]
as desired.
\EP

The claimed $\Os((3-\eps)^{k})$ time algorithm follows from Lemma~\ref{clm:time1} by boosting the success probability of Algorithm~\ref{alg:fvs1t} according to Lemma~\ref{obs:boost}.

\section{Improved Algorithm and Polynomial Space}
\label{sec:improved}

In this section, we present the $\Os(2.8446^k)$ time algorithm promised by \Cref{thm:main}. At a high level, our goal is to obtain a tighter bound on $\bar d=\deg(F)/k$, which we only bounded loosely by $2m/k$ in \Cref{sec:3minuseps}.  Recall that the treewidth bound of $(1-2^{-\bar d}+o(1))k$ from \Cref{lem:improved-tw} has exponentially dependence on $\bar d$, so every constant factor savings in $\bar d$ is crucial.

 First, we introduce another simple reduction step, which works well when $n \ll 3k$.

\BR[P]\label{red:uniform}
Sample a uniformly random vertex $v$. Delete $v$ and decrease $k$ by $1$.
\ER
%\BO
%\Cref{red:uniform} succeeds with probability at least $k/n$. If $n\le(3-\e)k$ for some $\e>0$, then it succeeds with probability at least $1/(3-\e)$.
%\EO
For the entire section, we will fix a constant $\e>0$ and obtain a running time that depends on $\e$. At the very end, we will optimize for $\e$ and achieve the running time $\Os(2.8446^k)$. For formality, we define the following assumption (\ref{as:1}) and state the corresponding direct claim.
\begin{gather}
n\le(3-\e)k\tag{A1}\label{as:1}
\end{gather}

\BCL\label{clm:a1}
If (\ref{as:1}) is true, then \Cref{red:uniform} succeeds with probability at least $1/(3-\e)$.
\ECL

Now suppose that (\ref{as:1}) is false. Observe that \Cref{red:sample} succeeds with probability at least $1/(3-\e)$ precisely when
\[\f{w(F)}{w(\bar F)} \stackrel{(\ref2)}= \f{\deg(F)-3|F|}{\deg(\bar F)-3|\bar F|} \ge \f1{2-\e}.\]
By \Cref{obs:deg}, we have
\[ \f{\deg(F)-3|F|}{\deg(\bar F)-3|\bar F|} \stackrel{(\ref1)}\ge \f{\deg(F)-3|F|}{(\deg(F)+2|\bar F|)-3|\bar F|} = \f{\deg(F)-3k}{\deg(F)-(n-k)}, \] 
and since (\ref{as:1}) is false, 
\[ \f{\deg(F)-3k}{\deg(F)-(n-k)} \ge \f{\deg(F)-3k}{\deg(F)-((3-\e)k-k)} = \f{\deg(F)-3k}{\deg(F)-(2-\e)k} .\]
We are interested in whether or not
\[ \f{\deg(F)-3k}{\deg(F)-(2-\e)k} \stackrel?\ge \f1{2-\e} \iff (2-\e)(\deg(F)-3k)\stackrel?\ge\deg(F)-(2-\e)k \iff \deg(F)\stackrel?\ge\f{4-2\e}{1-\e}k ,\]
which, if true, would imply that \Cref{red:sample} succeeds with probability at least $1/(3-\e)$. Again, we present the assumption and corresponding claim:
\begin{gather}
 \deg(F)\ge\f{4-2\e}{1-\e}k \text{\qquad for some FVS $F$ of size $k$} \tag{A2}\label{as:2}
\end{gather}

\BCL\label{clm:a2}
If (\ref{as:1}) is false and (\ref{as:2}) is true, then \Cref{red:sample} succeeds with probability at least $1/(3-\e)$.
\ECL

An immediate issue in this assumption is that the algorithm does not know $\deg(F)$, so it cannot determine whether (\ref{as:2}) is true or not. This can be accomplished by designing an algorithm to find Feedback Vertex Sets with additional properties defined as follows:
\newcommand{\BFVS}{BFVS\xspace}
\BD[Bounded Total Degree FVS]
In the \emph{bounded total degree FVS (\BFVS)} problem, the input is an unweighted, undirected graph $G$ on $n$ vertices, and parameters $k\le n$ and $\bar d\le O(1)$. The goal is to either output a FVS $F$ of size at most $k$ satisfying $\deg(F)\le\bar dk$, or correctly conclude none exists. \ED

\begin{algorithm}[H]
\mylabel{alg:mw}{\texttt{IC2}}
\caption{\ref{alg:mw}$(G,k,\bar d)$}
\small
\textbf{Input}: Graph $G=(V,E)$ and parameters $k\le n$ and $\bar d=O(1)$. \\
\textbf{Output}: A FVS $F$ of size at most $k$ satisfying $\deg(F)\le \bar dk$, or \Inf if none exists. 

%\textbf{Runtime}: $\Os(3^{(1-2^{-\bar d}+o(1))k})$
\begin{algorithmic}[1]
\State Order the vertices $V$ arbitrarily as $(v_1,\lds,v_n)$\label{line:btd1}
\State $F\gets\emptyset$
\For {$i=1,\lds,n$} \Comment{\textbf{Invariant:} $\deg_{}(F) \le \bar dk$}
 %\State Compute a tree decomposition of $G[\{v_1,\lds,v_{i-1}\}]$ by applying \Cref{lem:improved-tw} on input $F$\label{line:btd4}
 %\State Add $v_i$ to each bag in the tree decomposition\label{line:btd5}
 %\State $F\gets $ a BTD-FVS of $G[\{v_1,\lds,v_i\}]$ with parameters $k$ and $\bar d$, computed using \alert{TREEWIDTH-DP-BTD} \label{line:btd6}
 %\If{$F$ is \Inf or $\deg(F) > \bar dk$}
 % \State \Return \Inf
 %\EndIf
  \State Compute a separation $(A,B,S')$ of $G[\{v_1,\ldots,v_{i-1}\}]$ by Lemma~\ref{lem:sep} on input $F$\label{line:btd4}
  \State $S\gets S'\cup\{v_i\}$, so that $(A,B,S)$ is a separation of $G[\{v_1,\ldots,v_{i}\}]$ \label{line:btd5}
  \State $F\gets \ref{alg:btd-count}(G[\{v_1,\lds,v_i\}],k+1,A,B,S)$ \label{line:btd6}
  \If {$F$ is \Inf}
    \State \Return \Inf
  \EndIf
\EndFor
\State \Return $F$
\end{algorithmic}
\end{algorithm}

We remark that Lines~\ref{line:btd5}~and~\ref{line:btd6} replace the tree decomposition and \treewidthDP of \ref{alg:ic}.
Indeed. we need to solve the BFVS problem instead of FVS, and \treewidthDP could be easily extended to solve this problem as well.
However, it \treewidthDP crucially relies on exponential working space.
In the new algorithm we circumvent this by exploiting special properties of the separation directly. The function of the new algorithm is described by the following lemma:

\BL\label{lem:btd-count}
        There is an Algorithm~\ref{alg:btd-count} that, given $G$, a FVS $F$ of $G$ of size $k$, parameter $\bar d$, and a separation $(A,B,S)$ as given by Lemma~\ref{lem:sep},
        outputs a FVS of size at most $k-1$ satisfying $\deg(F)\le \bar d(k-1)$, or \Inf if none exists. The algorithm uses $\Os(3^{(1-2^{-\bar d}+o(1))k})$ time and polynomial space.
\EL
Because of its technical nature, we postpone the proof of this Lemma to Subsection~\ref{subsec:ccsimplesep}.

\BL\label{lem:btd-time}
        Algorithm~\ref{alg:mw} solves the \BFVS problem in $\Os(3^{(1-2^{-\bar d}+o(1))k})$ time and polynomial space.
\EL
\BP
%The proof of the running time essentially follows the proof of \Cref{cor:tw-alg}.
Suppose that there exists a FVS $F^*$ of size at most $k$ satisfying $\deg(F^*)\le\bar dk$. Let $(v_1,\lds,v_n)$ be the ordering from Line~\ref{line:btd1}, and define $V_i:=\{v_1,\lds,v_i\}$. Observe that $F^*\cap V_i$ is a FVS of $G[V_i]$ satisfying $\deg(F^*\cap V_i)\le\bar dk$, so the FVS problem on Line~\ref{line:btd6} is feasible. By Lemma~\ref{lem:btd-count}, Line~\ref{line:btd6} correctly computes a FVS with high probability on any given iteration. Therefore, with high probability, a FVS is returned successfully by a union bound.

We now bound the running time. On Line~\ref{line:btd4}, the current set $F$ is a FVS of $G[V_{i-1}]$ satisfying $\deg(F)\le\bar dk$, so \Cref{lem:improved-tw} guarantees a tree decomposition of width at most $(1-2^{-\bar d}+o(1))k$, and adding $v_i$ to each bag on Line~\ref{line:btd5} increases the width by at most $1$. By Lemma~\ref{lem:btd-count}, Line~\ref{line:btd6} runs in time $\Os(3^{(1-2^{-\bar d}+o(1))k})$ time, as desired.
Lastly, the space bound follows clearly from the descriptions of \ref{alg:mw} and Lemma~\ref{lem:btd-count}. \EP

\begin{algorithm}[H]
\mylabel{alg:fvst}{\texttt{FVS2}}
\caption{\ref{alg:fvst}$(G,k)$}
\small
\textbf{Input}: Graph $G=(V,E)$ and parameter $k\le n$. \\
\textbf{Output}: Either output a FVS $F$ of size $k$, or (possibly incorrectly) conclude that one does not exist (\Inf). 

%\textbf{Runtime}: $\poly(n,k)$ in expectation (\Cref{clm:time})
\begin{algorithmic}[1]
%\For {$T := \Th ( 3^{(1-2^{-\bar d}+\e)k} \log n)$ independent iterations} \Comment{Every iteration is called a \emph{trial}}
%  \State $G' \gets G$, $k'\gets k$, $F'\gets\emptyset$
%  \State $k_0\gets$ a large enough positive constant
\LineIf{$k=0$}{\Return $\emptyset$ if $G$ is acyclic, and \Return \Inf otherwise}

  %\While {$k'>k_0$}
    \State Exhaustively apply \Cref{red:1} to $(G,k)$ to get vertex set $F$ and instance $(G',k')$\label{line:apply-red}
    \State $\bar d\gets (4-2\e)/(1-\e)$
    \State Flip a coin with \Heads probability $3^{-(1-2^{-\bar d})k'}$
    \If {coin flipped \Heads}
      \State $F'\gets \ref{alg:mw}(G',k',\bar d)$ \label{line:mw}
    \Else
      \If {$n' \le (3-\e)k'$}\Comment{(\ref{as:1}) is true}
        \State Apply \Cref{red:uniform} to $(G',k')$ to get vertex $v\in V$ and instance $(G'',k'-1)$
      \Else\Comment{(\ref{as:1}) is false}
        \State Apply \Cref{red:sample} to $(G',k')$ to get vertex $v\in V$ and instance $(G'',k'-1)$
      \EndIf
      \State $F'\gets \ref{alg:fvst}(G'',k'-1)\cup\{v\}$\Comment{Denoting $\Inf\cup S=\Inf$ for any set $S$} \label{line:rec}
%        \If {$F'$ is not \Inf}
%          \State $F'\gets F'\cup\{v\}$
%        \EndIf
    \EndIf
  \State \Return $F\cup F'$
%      \If {$F'$ is not \Inf}
%        \State \Return $F\cup F'$
%      \Else
%        \State \Return \Inf
%      \EndIf
  %\EndWhile
%\EndFor
\end{algorithmic}
\end{algorithm}

\BL\label{lem:polyalgo}
Fix the parameter $\e \in (0,1)$, and let $c_\e:=\max \{ 3-\e,  3^{1-2^{-(4-2\e)/(1-\e)}}\}$.
If $c_\e\ge2$, then $\ref{alg:fvst}(G,k)$ succeeds with probability at least $c_\e^{-k}/k$.
Moreover, Algorithm $\ref{alg:fvst}(G,k)$ has $\Os(3^{o(k)})$ expected running time.
\EL
\BP
For the running time, the computation outside of Line~\ref{line:mw} clearly takes $\poly(n)$ time. For each $k'\in(k_0,k]$, Line~\ref{line:mw} is executed with probability $3^{-(1-2^{-\bar d})k'}$, and takes $\Os(3^{(1-2^{-\bar d}+o(1))k'})$ time by \Cref{lem:btd-time}. Therefore, in expectation, the total computation cost of Line~\ref{line:mw} is polynomial per value of $k'$, and also polynomial overall.

We continue with proving by induction on $k$ that $\ref{alg:fvst}(G,k)$ succeeds with probability at least $c^{-k}/k$ (we denote $c:=c_\e$).
This statement is trivial for $k=0$, since no probabilistic reductions are used and $\ref{alg:fvst}(G,k)$ succeeds with probability $1$. For the inductive step, consider an instance $\ref{alg:fvst}(G,k+1)$. Let $(G',k')$ be the reduced instance after Line~\ref{line:apply-red}. First, suppose that (\ref{as:2}) is false on instance $(G',k')$. That is, every FVS $F$ of size at most $k$ satisfies $\deg(F)\le \f{4-2\e}{1-\e}k' $; here, we will only need the existence of one such $F$. In this case, if \ref{alg:mw} in Line~\ref{line:mw} is executed, then it will correctly output a FVS $F$ of size at most $k$, with high probability by \Cref{lem:btd-time}. This happens with probability at least
\[ 3^{-(1-2^{-\bar d})k'} \cd \lp 1-\f1{\poly(n)}\rp \ge c^{-k'} \cd \f1k \ge \f{c^{-k}}k, \]
as desired.

Otherwise, suppose that (\ref{as:2}) is true on instance $(G',k')$. Then, by \Cref{clm:a1,clm:a2}, regardless of whether (\ref{as:1}) is true, the reduction applied succeeds with probability at least $1/(3-\e)$. This is assuming, of course, that Line~\ref{line:mw} is not executed, which happens with probability $1-c^{-k'}\ge 1-2^{-k'} \ge 1-1/k'$ since $c\ge2$. By induction, the recursive call on Line~\ref{line:rec} succeeds with probability at least $c^{-(k'-1)}/(k'-1) $, so the overall probability of success is at least
\[ \lp1-\f1{k'}\rp \cd \f1{3-\e} \cd\f{c ^{-(k'-1)}}{k'-1} \ge \lp1-\f1{k'}\rp \cd \f1c \cd \f{c ^{-(k'-1)}}{k'-1} = \f{c^{-k'}}{k'}\ge \f{c^{-k}}k ,\]
as desired.
\EP

To optimize for $c_\e$, we set $\e\approx 0.155433$, giving $c_\e\le 2.8446$. \Cref{thm:main} now follows by combining Lemma~\ref{lem:polyalgo} with Lemma~\ref{obs:boost}.

\section{Further Improvement Using Matrix Multiplication}
\label{sec:mm}

In this section, we further speed up the algorithm \ref{alg:mw} that solves the BFVS problem. First, we open the Cut\&Count black box, which essentially transforms the FVS (or BFVS) problem to counting the number of partitions of the graph that satisfy a particular constraint, modulo some integer. The transformation are similar to the presentation in~\cite{cutandcount-arxiv}, so we defer the details to \Cref{sec:cc}. In~\cite{cutandcount-arxiv}, this counting problem is solved using dynamic programming on a tree decomposition in $\Os(3^\tw)$ time, which can be translated to an $\Os(3^k)$ time algorithm for BFVS. 

As with most problems efficiently solvable on tree decompositions, the Cut\&Count problem performs well when given small vertex separators. Indeed, we show in Subsection~\ref{subsec:ccsimplesep} that instead of calling the $\Os(3^{\tw})$ algorithm on the tree decomposition from \Cref{lem:improved-tw}, we can solve the problem by applying dynamic programming on the $(A,B,S)$ separation from \Cref{lem:sep} directly in the same running time, and also in polynomial space. The resulting algorithm is the algorithm \ref{alg:btd-count} promised by \Cref{lem:btd-count}.

How do we obtain an even faster running time, then? The main insight in this section is that the counting problem has a special arithmetic nature that also makes it amenable to \emph{matrix multiplication} as well. Combining these two observations, we construct a \emph{three-way vertex separation} of the graph $G$, defined as follows:

\BD[Three-Way Separation]\label{def:threewaysep}
Given a graph $G=(V,E)$, a partition $(S_1,S_2,S_3,S_{1,2},S_{1,3},S_{2,3},S_{1,2,3})$ of $V$ is a \emph{separation} if there are no edges between any two sets $S_I,S_J$ whose sets $I$ and $J$ are disjoint.
\ED

The construction of a good three-way separation is very similar to the ``two-way separation'' in \Cref{lem:improved-tw}: it also features a randomized coloring procedure and is proven using concentration arguments. We then apply a combination of dynamic programming and matrix multiplication on the three-way separator, which is presented as Algorithm~\ref{alg:mm} in Subsection~\ref{subsec:cc3waysep}.

\subsection{Three-Way Separator}

\BL[Three-Way Separator]\label{lem:3way}
Given an instance $(G,k)$ and a FVS $F$ of size at most $k$, define $\bar d := \deg(F)/k$, and suppose that $\bar d=O(1)$. % and suppose that $\bar d=\Th(1)$ (independent of $k$).
There is a polynomial time algorithm that computes a three-way separation $(S_1,S_2,S_3,S_{1,2},S_{1,3},S_{2,3},S_{1,2,3})$ of $G$ such that there exists values $f_1,f_2$ satisfying:
  \BE
  \im[1a.] $f_1\ge3^{-\bar d}$ 
  \im[1b.] $(f_1-o(1))k\le|S_i\cap F|\le(f_1+o(1))k$ for all $i\in[3]$
  \im[2a.] $f_2\ge (2/3)^{\bar d}-2f_1$ 
  \im[2b.] $(f_2-o(1))k\le|S_{i,j}\cap F|\le(f_2+o(1))k$ for all $1\le i<j\le3$
  \EE
\EL
\BP
Our proof follows the outline of the proof of \Cref{lem:sep}. Initially, we start out the same: fix $\e:=k^{-0.01}$, apply \Cref{lem:tree} on the same input (that is, $G-F$), and construct the bipartite graph $H$ on the bipartition $F\uplus R$ in the same manner as in \Cref{lem:sep}. We recall that (1) $|R| \le |E[\bar F,F]| + 2|E[F]| = \deg(F)$, (2) every vertex in $R$ has degree at most $\bar d/\e$, and (3) the degree of a vertex $v\in F$ in $H$ is at most $\deg(v)$.

Now, instead of randomly two-coloring the vertex set $R$, the algorithm three-colors it. That is, for each vertex in $R$, color it with a color in $\{1,2,3\}$ chosen uniformly and independently at random. For each subset $I\s 2^{[3]}\bs\{\emptyset\}$, create a vertex set $S_I$ consisting of all vertices $v\in F$ whose neighborhood in $H$ sees the color set $I$ precisely. More formally, let $c(v)$ and $N(v)$ be the color of $v\in R$ and the neighbors of $v$ in $H$, and define
$ S_I = \{ v\in R : \bigcup_{u\in N(v)}c(u) = I \} .$
Furthermore, if $I$ is a singleton set $\{i\}$, then add (to $S_I$) all vertices in the connected components $C$ whose component vertex in $R$ is colored $i$. From now on, we abuse notation, sometimes referring to sets  $S_{\{1\}}$, $S_{\{1,2\}}$, etc.\ as  $S_1$, $S_{1,2}$, etc.

The proof of the following easy Subclaim is essentially the same as the proof of~\Cref{clm:sep1} (but with more cases), and therefore omitted.

%\jes{This is pushing it a bit for a `proof'. Maybe either expand, or omit the proof environment, and add the following sentence: The proof essentially follows that of but with more cases.}\jl{Deleted the proof environment. Does it look OK?}

\BSCL
$(S_1,S_2,S_3,S_{1,2},S_{1,3},S_{2,3},S_{1,2,3})$ is a three-way separation.
\ESCL
%\begin{subproof}
%The proof essentially follows that of \Cref{clm:sep1}, but is more cumbersome. We leave the details to the reader.
%\end{subproof}

We start by proving Conditions (1a) and (1b) with the following strategy. First, we first present a value $f_1$ such that Condition (1b) holds with probability $1-1/\poly(k)$. Then, we argue that actually, this value of $f_1$ satisfies Condition (1a) (with probability $1$). 

\BSCL\label{clm:cond1b}
%There is a value $f_1$ that satisfies Condition (1b) with probability $1-1/\poly(k)$.
For $f_1:=\lp\sum_dp_d\cd|F'_d|\rp/|F'|$, Condition (1b) holds with  probability $1-1/\poly(k)$.
\ESCL
\begin{subproof}

The proof uses similar concentration arguments as the proof of \Cref{clm:cond1}. Again, fix a parameter $\e:=k^{-0.01}$ throughout the proof. Let $F'$ be the vertices with degree at most $\bar d/\e$, so that again, $|F'|\ge(1-o(1))|F|$. Form the intersection graph $I$ on the vertex set $F'$ as in \Cref{clm:cond1}, and color it with $(\bar d/\e)^2+1$ colors with a standard greedy algorithm.

Let $F'_d$ be the vertices in $F'$ with degree $d\le\bar d/\e$ in $H$, and let $F'_{i,d}$ be the vertices colored $i$ with degree $d$ in $H$. If $|F'_{i,d}|<k^{0.9}$, then ignore it; since $\bar d\le O(1)$ and $\e=k^{-0.01}$, the sum of all such $F'_{i,d}$ is at most $((\bar d/\e)^2+1) \cd (\bar d/\e)\cd k^{0.9} = o(k)$, so they only affect condition (1b) by an additive $o(k)$ factor. Henceforth, assume that $|F'_i|\ge k^{0.9}$. 

We only focus our attention on $S_1$; the claim for $S_2$ and $S_3$ are identical. The probability that a vertex $v\in F'_{i,d}$ joins $S_1$ is a fixed number $p_d$ that only depends on $d$. Let $X:=|F'_{i,d}\cap A|$ be the number of vertices in $F'_{i,d}$ that join $S_1$; we have $\E[X]=p_d\cd|F'_{i,d}|$, and by Hoeffding's inequality,
\[ \Pr[|X-\E[X]|\ge k^{0.8}]\le2\exp(-2\cd(k^{0.8})^2/|F'_{i,d}|)\le2\exp(-2\cd k^{1.6}/k)\le1/\poly(k) \]
for large enough $k$. Taking a union bound over all colors $i$ and degrees $d$, we conclude that with probability $1-1/\poly(k)$,
\[ \left| |F'\cap S_1| - \E \lb |F'\cap S_1| \rb \right| \le ((\bar d/\e)^2+1)\cd (\bar d/\e)\cd k^{0.8} + o(k) = o(k) .\]
Moreover,
\[ \E[|F'\cap S_1|] = \sum_d p_d\cd|F'_d|, \]
and we see that
\[ ||S_1\cap F|-f_1k| = ||S_1\cap F'|-f_1\cd|F'||+o(k)=o(k) ,\]
which fulfills condition (1b).
\end{subproof}

\BSCL
For $f_1:=\lp\sum_dp_d\cd|F'_d|\rp/|F'|$, Condition (1a) holds with  probability $1-1/\poly(k)$.
\ESCL
\begin{subproof}
The number $p_d$ equals $3^{-d}$, so $f_1=(\sum_d|F'_d|\cd3^{-d})/|F'|$. Observe that $\deg(F')/|F'|\le\deg(F)/|F|=\bar d$, since the vertices in $F\bs F'$ are precisely those with degree exceeding some threshold. Therefore,
\begin{align*}
f_1 &= \f1{|F'|}\sum_d|F'_d|\cd3^{-d}
\\&= \f1{|F'|}\sum_{v\in F'}3^{-\deg(v)}
\\&\ge 3^{-\deg(F')/|F'|},
\end{align*}
where the last inequality follows from convexity of the function $3^{-x}$.
\end{subproof}

\BSCL\label{clm:cond2b}
For $f_2 := \lp\sum_dp_d\cd|F'_d|\rp/k$, Condition (2b) holds with probability $1-1/\poly(k)$.
\ESCL
\begin{subproof}
The proof is identical to that of \Cref{clm:cond1b}, except that $p_d$ is now the probability that a vertex $v\in F'_{i,d}$ joins $S_2$.
\end{subproof}

\BSCL
The $f_1:=\lp\sum_dp_d\cd|F'_d|\rp/|F'|$ and $f_2 := \lp\sum_dp_d\cd|F'_d|\rp/k$, condition (2a) holds.
\ESCL
\begin{subproof}
Here, our strategy is slightly different. Let $q_d$ be the probability that a vertex $v$ of degree $d$ in $H$ joins one of $S_1$, $S_2$, and $S_{1,2}$. Since this is also the probability that no neighbor of $v$ is colored $3$, we have $q_d=(2/3)^d$. Let $p_{1,d}$ and $p_{2,d}$ be the value of $p_d$ in the proofs of \Cref{clm:cond1b} and \Cref{clm:cond2b}, respectively, so that $q_d = 2p_{1,d}+p_{2,d}$. Therefore,
\begin{align*}
2f_1+f_2 &= 2\cd\f1{|F'|}\sum_dp_{1,d}\cd|F'_d| + \f1{|F'|}\sum_dp_{2,d}\cd|F'_d|
\\&= \f1{|F'|}\sum_d q_d\cd|F'_d|
\\&= \f1{|F'|}\sum_d|F'_d|\cd\lp\f23\rp^{d}
\\&= \f1{|F'|}\sum_{v\in F'}\lp\f23\rp^{\deg(v)}
\\&\ge \lp\f23\rp^{\deg(F')/|F'|},
\end{align*}
where the last inequality follows from convexity of the function $(2/3)^x$. Again, we have $\deg(F')/|F'|\le\deg(F)/|F|=\bar d$, so
\[ f_2 \ge \lp\f23\rp^{\deg(F')/|F'|} -2f_1\ge \lp\f23\rp^{\bar d}-2f_1 ,\]
which fulfills condition (2a).
\end{subproof}

\EP

\subsection{Matrix Multiplication Algorithm}

In this section, we present the improved iterative compression algorithm \ref{alg:btd-mm}. It is mostly unchanged from \ref{alg:mw}, except that the algorithm now computes a three-way separator and calls the faster BFVS algorithm~\ref{alg:mm} on it. Because of its technical nature, the algorithm \ref{alg:mm} and its analysis are deferred to Subsection~\ref{subsec:cc3waysep}. Instead, we simply state its running time guarantee in \Cref{lem:mmalgo} below.

\begin{algorithm}[H]
\mylabel{alg:btd-mm}{\texttt{IC3}}
\caption{\ref{alg:btd-mm}$(G,k,\bar d)$}
\small
\textbf{Input}: Graph $G=(V,E)$ and parameters $k\le n$ and $\bar d=O(1)$. \\
\textbf{Output}: A FVS $F$ of size at most $k$ satisfying $\deg(F)\le \bar dk$, or \Inf if none exists. 
%\textbf{Runtime}: $\Os(3^{(1-2^{-\bar d}+o(1))k})$
\begin{algorithmic}[1]
\State Order the vertices $V$ arbitrarily as $(v_1,\lds,v_n)$
\State $F\gets\emptyset$
\For {$i=1,\lds,n$} \Comment{\textbf{Invariant:} $\deg_{}(F) \le \bar dk$}
 %\State Compute a tree decomposition of $G[\{v_1,\lds,v_{i-1}\}]$ by applying \Cref{lem:improved-tw} on input $F$
 %\State Add $v_i$ to each bag in the tree decomposition
 %\State $F\gets $ a BTD-FVS of $G[\{v_1,\lds,v_i\}]$ with parameters $k$ and $\bar d$, computed using \alert{TREEWIDTH-DP-BTD}
 %\If{$F$ is \Inf or $\deg(F) > \bar dk$}
 % \State \Return \Inf
 %\EndIf
  \State Compute a separation $(S_1,S_2,S_3,S_{1,2},S_{1,3},S_{2,3},S_{1,2,3})$ of $G[\{v_1,\ldots,v_{i-1}\}]$ by Lemma~\ref{lem:3way} on input $F$
  \State $S_{1,2,3}\gets S_{1,2,3}\cup\{v_i\}$, so that $(S_1,S_2,S_3,S_{1,2},S_{1,3},S_{2,3},S_{1,2,3})$ is a three-way separation of $G[\{v_1,\ldots,v_{i}\}]$
  \State $F\gets \ref{alg:mm}(G[\{v_1,\lds,v_i\}],k+1,S_1,S_2,S_3,S_{1,2},S_{1,3},S_{2,3},S_{1,2,3})$
  \If {$F$ is \Inf}
    \State \Return \Inf
  \EndIf

\EndFor
\State \Return $F$
\end{algorithmic}
\end{algorithm}

\BL\label{lem:mmalgo}
There is an Algorithm~\ref{alg:mm} that, given $G$, a FVS $F$ of $G$ of size $k$, parameter $\bar d$, and a separation $(S_1,S_2,S_3,S_{1,2},S_{1,3},S_{2,3},S_{1,2,3})$ as given by \Cref{lem:3way},
outputs a FVS of size at most $k-1$ satisfying $\deg(F)\le \bar d(k-1)$, or \Inf if none exists.
The algorithm runs in time $\Os(3^{(1-\min\{ (2/3)^{\bar d}, (3-\om)(2/3)^{\bar d} + (2\om-3)3^{-\bar d} \}+o(1))k})$.
\EL

Assuming \Cref{lem:mmalgo}, we prove our main result, \Cref{thm:main-faster}, restated below.
\mainfaster*
%\BT
%FVS can be solved in time $\Os(2.7^k)$. If we assume $\om=2$, then FVS can be solved in time $\Os(2.6252^k)$.
%\ET
\BP
We run \ref{alg:fvst}, replacing every occurrence of \ref{alg:mw} with \ref{alg:btd-mm}. Following \ref{alg:fvst}, we define $\bar d:=(4-2\e)/(1-\e)$ for some $\e>0$ to be determined later; note that $\bar d\ge4$ for any $\e>0$. Since $\om<2.3728639$~\cite{DBLP:conf/issac/Gall14a}, by \Cref{lem:btd-mm}, \ref{alg:btd-mm} runs in time $\Os(3^{(1-((3-\om)(2/3)^{\bar d}+(2\om-3)\cd3^{-\bar d})+o(1))k})$, so \ref{alg:fvst} runs in time $\Os(c_\e^k)$ for $c_\e:=\max \{ 3-\e,  3^{1-((3-\om)(2/3)^{(4-2\e)/(1-\e)}+(2\om-3)\cd3^{-(4-2\e)/(1-\e)})+o(1)}\}$. To optimize for $c_\e$, we set $\e\approx 0.3000237$, giving $c_\e\le2.699977 $.

If $\om=2$, then by \Cref{lem:btd-mm}, \ref{alg:btd-mm} runs in time $\Os(3^{(1-(2/3)^{\bar d}+o(1))k})$, so \ref{alg:fvst} runs in time $\Os(c_\e^k)$ for $c_\e:=\max \{ 3-\e,  3^{1-(2/3)^{(4-2\e)/(1-\e)}+o(1)}\}$. To optimize for $c_\e$, we set $\e\approx 0.3748068$, giving $c_\e\le2.6252 $.
\EP

\section{Cut and Count}
\label{sec:cc}

In this section we open the black box formed by the Cut\&Count approach~\cite{cutandcount-arxiv}.
It should be noted that most of this section, except Subsection~\ref{subsec:cc3waysep}, is very similar to the methods from~\cite{cutandcount-arxiv}.
We need the following definition:

\begin{definition}[\cite{cutandcount-arxiv}]\label{def:cutobjects}
        Let $G$ be a graph with weight function $\omega: V(G) \rightarrow \mathbb{N}$. Let $s,m',W$ be integers. Define $\mathcal{C}^{\omega,s,m'}_W$ to be the set
        \[
        \Big\{(F,L,R) \in \binom{V(G)}{\cdot,\cdot,\cdot} \Bigm| \omega(F)=W \wedge E[L,R]=\emptyset  \wedge |F|=s \wedge |E[L\cup R]|=m'  \Big\}.
        \]
\end{definition}

In the above, $\binom{V(G)}{\cdot,\cdot,\cdot}$ denotes the set of all partitions of $V(G)$ into three sets (denoted by $F$ for `Feedback Vertex Set', $L$ for `left side of the cut', and $R$ for `right side of the cut').
In words, a partition $(F,L,R)$ of the vertex set is an element of $\mathcal{C}^{\omega,s,m'}_W$ if the total weight of all vertices in $F$ equals $W$, there are no edges between $L$ and $R$, exactly $m'$ edges with both endpoints either in $L$ or in $R$, and $|F|=s$.
The use of Definition~\ref{def:cutobjects} becomes clear in the following lemma.
Intuitively, the crux is that $F$ is a FVS of $G$ if and only if for some $s,m',W$ the number of partitions $(L,R)$ of $V \setminus F$ such that $|\mathcal{C}^{\omega,s,m'}_W|$ is odd; in this case $\deg(F)$ can be read off from $W$.

\begin{lemma}\label{lem:forestfvsequiv}
        Let $G$ be a graph and $d$ be an integer. Pick $\omega(v) \in_R \{1,\ldots,2|V|\}$ uniformly and independent at random for every $v\in V(G)$, and define $\omega'(v):=|V|^2\omega(v)+d(v)$. The following statements hold:
        \begin{enumerate}
                \item If for some integers $m'$, $W=i|V|^2+d$ we have that $|\mathcal{C}^{\omega',k,m'}_W|\not\equiv 0\ (\mathrm{mod}\ 2^{n-k-m'})$, then $G$ has a feedback vertex set $F$ satisfying $\deg(F)=d$.
                \item If $G$ has a feedback vertex set $F$ satisfying $\deg(F)=d$, then with probability at least $1/2$ for some $m'$, $W=i|V|^2+d$ we have that $|\mathcal{C}^{\omega',k,m'}_W|\not\equiv 0\ (\mathrm{mod}\ 2^{n-k-m'})$.
        \end{enumerate}
\end{lemma}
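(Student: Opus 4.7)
The proof naturally splits into two parts: a deterministic structural identity expressing $|\mathcal{C}^{\omega',k,m'}_W|$ modulo a suitable power of $2$ in terms of the parity of qualifying feedback vertex sets, followed by a probabilistic isolation argument via the Mulmuley--Vazirani--Vazirani Isolation Lemma.

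For the structural part, I would group the triples $(F,L,R)\in\mathcal{C}^{\omega',k,m'}_W$ by their first coordinate $F$. The definition $\omega'(v)=|V|^2\omega(v)+\deg(v)$, combined with the crude bound $\deg(F)\le |V|(|V|-1)<|V|^2$, implies that $\omega'(F)=i|V|^2+d$ is equivalent to the two conditions $\omega(F)=i$ and $\deg(F)=d$. For any such $F$ of size $k$ with $|E(G[V\setminus F])|=m'$, the valid pairs $(L,R)$ are precisely the $2$-colorings of the $c(F)$ connected components of $G[V\setminus F]$ (using the standard pivoting convention from Cut\&Count to break the $L\leftrightarrow R$ symmetry, e.g.\ forcing a designated vertex into $L$). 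The standard edge-counting identity for forests gives $c(F)\ge n-k-m'$ with equality precisely when $G[V\setminus F]$ is a forest, i.e.\ exactly when $F$ is a feedback vertex set of $G$. Therefore each such FVS contributes exactly $2^{n-k-m'-1}$ after pivoting, while each non-FVS contributes a strictly larger power of two, hence a multiple of $2^{n-k-m'}$. Summing over $F$ yields
\[ |\mathcal{C}^{\omega',k,m'}_W| \;\equiv\; 2^{n-k-m'-1}\cdot N \pmod{2^{n-k-m'}}, \]
where $N$ counts the feedback vertex sets $F$ of size $k$ with $\omega(F)=i$, $\deg(F)=d$, and $|E(G[V\setminus F])|=m'$. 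Statement~(1) is then immediate: a non-zero residue forces $N\ge 1$, producing the desired FVS with $\deg(F)=d$.

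For statement~(2), assume $G$ has an FVS $F^*$ with $\deg(F^*)=d$, and let $\mathcal{F}:=\{F\subseteq V : |F|=k,\ \deg(F)=d,\ F\text{ is a FVS of }G\}$. I apply the Isolation Lemma to $\mathcal{F}$ on the ground set $V$ with weights $\omega(v)$ drawn uniformly and independently from $\{1,\ldots,2|V|\}$. Since $\mathcal{F}$ is non-empty, with probability at least $1/2$ it has a unique minimum $\omega$-weight element $F^\dagger$. Setting $i:=\omega(F^\dagger)$, $W:=i|V|^2+d$, and $m':=|E(G[V\setminus F^\dagger])|$, the count $N$ in the structural identity equals exactly $1$, which is odd, giving the required non-zero residue.

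The main obstacle, in my view, is nailing down the pivoting convention and the modular bookkeeping so that the residue class matches the lemma's statement exactly; once pivoting is fixed, the forest-vs-cyclic dichotomy for $c(F)$ and the standard Isolation Lemma on $\{1,\ldots,2|V|\}$-valued weights are both routine, and the rest of the proof is a matter of careful accounting.
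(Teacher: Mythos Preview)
Your proposal is correct and follows essentially the same approach as the paper: group the triples $(F,L,R)$ by $F$, count the $(L,R)$-extensions as a power of $2$ indexed by the number of connected components of $G[V\setminus F]$, invoke the characterization that $G[V\setminus F]$ is a forest iff this number equals $(n-k)-m'$, and apply the Isolation Lemma over the family of size-$k$ feedback vertex sets with $\deg(F)=d$ for part~(2). You are in fact more explicit than the paper on the one delicate point you flag: the paper's definition of $\mathcal{C}^{\omega,s,m'}_W$ does not fix a pivot and its proof counts extensions as $2^{\mathrm{cc}}$, which for a forest equals $2^{n-k-m'}$ and is therefore $\equiv 0\pmod{2^{n-k-m'}}$; either a pivot (as you propose) or shifting the modulus to $2^{n-k-m'+1}$ is needed to make the residue argument go through literally.
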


Lemma~\ref{lem:forestfvsequiv} states that in order to solve the Feedback Vertex Set problem it is sufficient to compute $|\mathcal{C}^{\omega,n-k,m'}_W|$ for all setting of the parameters. 
Before proving the Lemma we need to recall some standard building blocks:
\begin{lemma}[Lemma~A.7 in~\cite{cutandcount-arxiv}]\label{lem:fvsvscc}
        A graph with $n$ vertices and $m$ edges is a forest iff it has at most $n-m$ connected components.
\end{lemma}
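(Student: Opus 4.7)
The plan is to prove the slightly stronger statement that for any graph $G$ on $n$ vertices with $m$ edges and $c$ connected components, one has $c \ge n-m$, with equality if and only if $G$ is a forest. The lemma follows immediately: if $G$ is a forest, then equality gives $c = n-m \le n-m$; conversely, if $c \le n-m$, then combining with $c \ge n-m$ forces equality and hence $G$ is a forest.

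First I would establish the inequality $c \ge n-m$ by induction on $m$, building $G$ one edge at a time starting from the edgeless graph on $V(G)$, which has $c = n$ components. Adding any edge $e=\{u,v\}$ to the current graph decreases the number of connected components by exactly $1$ if $u$ and $v$ lie in different components, and leaves it unchanged otherwise; in either case $c$ drops by at most $1$. After inserting all $m$ edges we therefore have $c \ge n - m$.

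Next I would analyze when equality holds by tracking the same inductive process. Equality $c = n-m$ holds precisely when every edge insertion strictly decreased the number of components, i.e., when each newly added edge connected two previously disconnected components. But an edge $\{u,v\}$ creates a cycle in the current graph if and only if $u$ and $v$ already lie in the same component, so equality holds exactly when no edge insertion ever closes a cycle, which is the same as $G$ being acyclic, i.e., a forest. No step here is a real obstacle; the only care needed is to note that the property of being a forest does not depend on the order in which edges were added, so the ``building up'' argument faithfully characterizes $G$ itself.
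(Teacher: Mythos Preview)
Your argument is correct and entirely standard. Note, however, that the paper does not give its own proof of this lemma: it is quoted verbatim as Lemma~A.7 of~\cite{cutandcount-arxiv} and used as a black box, so there is no ``paper's approach'' to compare against. Your inductive edge-insertion argument establishes the well-known identity $c \ge n - m$ with equality exactly for forests; an equivalent (and perhaps slightly cleaner) route is to observe directly that a forest with $c$ components has $m = \sum_i (n_i - 1) = n - c$, while any non-forest contains a spanning forest with the same components and strictly fewer edges, forcing $m > n - c$. Either way the lemma follows.
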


\begin{definition}
        A function $\omega\colon U \rightarrow \mathbb{Z}$ \emph{isolates} a set family $\mathcal{F} \subseteq 2^U$ if there is a unique $S' \in \mathcal{F}$ with $\omega(S')= \min_{S \in \mathcal{F}}\omega(S)$, where~$\omega(S') := \sum _{v \in S'} \omega(v)$.
\end{definition}

\begin{lemma}[Isolation Lemma, \cite{MulmuleyVV87}]
        \label{lem:iso}
        Let $\mathcal{F} \subseteq 2^U$ be a non-empty set family over a universe $U$.
        For each $u \in U$, choose a weight $\omega(u) \in \{1,2,\ldots,W\}$ 
        uniformly and independently at random.
        Then 
        $\Pr[\omega \textnormal{ isolates } \mathcal{F}] \geq 1 - |U|/W$.
\end{lemma}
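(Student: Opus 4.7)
The plan is to prove the Isolation Lemma by the standard ``ambiguous element'' argument. For each $u \in U$, I will call $u$ \emph{ambiguous} (with respect to $\omega$) if some minimum-weight set of $\mathcal{F}$ contains $u$ and some other minimum-weight set of $\mathcal{F}$ excludes $u$. The first observation is that if $\omega$ fails to isolate $\mathcal{F}$, then the minimum is attained by at least two distinct sets, and any element in their symmetric difference is ambiguous. Hence $\Pr[\omega \text{ does not isolate } \mathcal{F}] \le \sum_{u \in U}\Pr[u \text{ is ambiguous}]$, and it suffices to show that each summand is at most $1/W$.

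To bound $\Pr[u \text{ is ambiguous}]$ for a fixed $u$, I would use the principle of deferred decisions: condition on all weights $\omega(v)$ for $v \ne u$, and set
\[
\alpha_u := \min\{\omega(S) : S \in \mathcal{F},\, u \notin S\}, \qquad \beta_u := \min\{\omega(S)-\omega(u) : S \in \mathcal{F},\, u \in S\},
\]
with the convention that a minimum over the empty set is $+\infty$. Both $\alpha_u$ and $\beta_u$ are determined by the conditioned-upon weights alone and are independent of $\omega(u)$. The overall minimum weight attained by $\mathcal{F}$ is then $\min(\alpha_u,\,\beta_u + \omega(u))$; a set containing $u$ is a minimum-weight set iff $\omega(u) \le \alpha_u - \beta_u$, while a set excluding $u$ is a minimum-weight set iff $\omega(u) \ge \alpha_u - \beta_u$. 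Therefore $u$ is ambiguous precisely when $\omega(u) = \alpha_u - \beta_u$, a single fixed threshold.

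Since $\omega(u)$ is uniform on $\{1,\ldots,W\}$ and independent of $\alpha_u - \beta_u$, the event $\omega(u) = \alpha_u - \beta_u$ has conditional (and hence unconditional) probability at most $1/W$. Summing over the $|U|$ elements via the union bound yields $\Pr[\omega \text{ does not isolate } \mathcal{F}] \le |U|/W$, which is equivalent to the claimed bound. The only delicate point is verifying the ``iff'' characterization of ambiguity in terms of the single threshold $\alpha_u - \beta_u$; this reduces to a short case analysis on the sign of $\omega(u) - (\alpha_u - \beta_u)$, using that $\alpha_u$ and $\beta_u$ capture the best a set excluding (resp.\ containing) $u$ can achieve in weight. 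After this observation, the remainder of the argument is a clean independence-plus-union-bound calculation.
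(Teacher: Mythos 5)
Your proof is correct: it is the standard ``ambiguous element'' argument from Mulmuley--Vazirani--Vazirani, and the key steps (the union bound over ambiguous elements, the deferred-decision definition of $\alpha_u$ and $\beta_u$, and the observation that ambiguity of $u$ forces $\omega(u)$ to hit the single threshold $\alpha_u-\beta_u$) are all sound, including the $+\infty$ convention handling the degenerate cases. The paper itself gives no proof of this lemma---it is imported verbatim with a citation to \cite{MulmuleyVV87}---so there is nothing to compare against beyond noting that your argument is precisely the one in that reference.
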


\begin{proof}[Proof of Lemma~\ref{lem:forestfvsequiv}]
        We first prove 1. Note that if $|\mathcal{C}^{\omega',k,m'}_W|\not\equiv 0\ (\mathrm{mod}\ 2^{n-k-m'})$, there must be some vertex subset $F$ such that the number of choices $L,R$ with $(F,L,R) \in \mathcal{C}^{\omega,s,m'}_W$ is not a multiple of $2^{n-k-m'}$. As we can independently decide for each component of $G[V \setminus F]$ whether to include it in $L,R$ $G[V \setminus F]$ thus must have at most $n-k-m'$ connected components.
        By Lemma~\ref{lem:fvsvscc} it therefore must be a FVS. The condition on the degree follows by the weighting.
        
        For item 2. First apply Lemma~\ref{lem:iso} with $U=V$ and the set family $\mathcal{F}$ being the set of all feedback vertex sets $F$ of $G$ satisfying $\deg(F)=d$.
        With probability $1/2$, there will be some weight $i$ such that there is a unique FVS $F$ with $\deg(F)=d$ of weight $i$.
        By Lemma~\ref{lem:fvsvscc} this is the only $F$ that has a contribution to $|\mathcal{C}^{\omega',k,m'}_W|$ that is not a multiple of $2^{n-k-m'}$ as the number of extension of $F$ to an object of $\mathcal{C}^{\omega',k,m'}_W$ is exactly $2^{\mathrm{cc}(G[V \setminus F])}$,\footnote{Here we let $\mathrm{cc}$ denote the number of connected components} assuming $\omega(F)=W$, $|F|=k$ and $|E[V \setminus F]|=m'$.
\end{proof}

We now continue with a lemma that is useful towards computing $|\mathcal{C}^{\omega,s,m'}_W|$.

\begin{definition}
        If $F^0 \subseteq V(G)$ is a FVS of $G$ and $(F,L,R) \in \binom{F^0}{\cdot,\cdot,\cdot}$, we denote
        \[
        c^{\omega,s,m'}_W(F,L,R) = |\{ (F',L',R') \in \mathcal{C}^{\omega,s,m'}_W: F' \cap F^0 = F \wedge L' \cap F^0= L\wedge R' \cap F^0 = R\}|.
        \]
\end{definition}

\begin{lemma}\label{lem:forestDP}
        There is a polynomial time algorithm $\forestDP(G,\omega,F,L,R,s,m',W)$ that, given a graph $G$, weight function $\omega:V(G) \rightarrow \mathbb{N}$, vertex sets $F,L,R$ and integers $s,k,m',W$ computes $c^{\omega,s,m'}_W(F,L,R)$ in $\poly(n,W)$ time, assuming that $F \cup L \cup R$ is an FVS of $G$.
\end{lemma}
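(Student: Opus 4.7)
The plan is to run a straightforward bottom-up dynamic program on the forest $G[V \setminus F^0]$, where $F^0 = F \cup L \cup R$ is the given FVS. Root each tree in this forest arbitrarily. Every partition $(F',L',R')$ counted by $c^{\omega,s,m'}_W(F,L,R)$ is obtained by extending $(F,L,R)$ to the forest vertices, i.e., by choosing an assignment $a(v) \in \{F',L',R'\}$ for each $v \in V \setminus F^0$, subject to three constraints: (i) no forest edge joins $L'$ to $R'$, (ii) no forest vertex lies in $L'$ (resp.\ $R'$) while having a neighbor in the fixed set $R$ (resp.\ $L$), and (iii) the totals $|F'|=s$, $\omega(F')=W$, and $|E[L'\cup R']|=m'$ come out correctly. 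Constraint (ii) is purely local, so each forest vertex has a precomputable list of allowed assignments.

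For each forest vertex $v$, I maintain a table $T_v[a, s_v, w_v, m_v]$ counting the number of extensions of the subtree rooted at $v$ such that $v$ is assigned $a \in \{F',L',R'\}$, exactly $s_v$ subtree vertices are in $F'$, these contribute weight $w_v$, and exactly $m_v$ subtree edges have both endpoints in $L'\cup R'$. The recurrence is the standard product of child contributions combined by convolution in the $(s_v,w_v,m_v)$ indices: when a child $c$ with assignment $a_c$ is merged into its parent $v$ with assignment $a$, the tree edge $(v,c)$ is rejected if $\{a,a_c\} = \{L',R'\}$, contributes $+1$ to $m_v$ if both $a,a_c \in \{L',R'\}$ (necessarily equal), and contributes $0$ otherwise. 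Leaves initialize the table using only constraint (ii). After computing a root table, combine the per-tree tables across the connected components of the forest by convolution as well.

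Finally, to obtain $c^{\omega,s,m'}_W(F,L,R)$, shift the running totals by the fixed contributions of the already-chosen partition on $F^0$: add $|F|$ to $s_v$, $\omega(F)$ to $w_v$, and $|E[L \cup R]|$ (counted within $F^0$) to $m_v$, and read off the entry with the prescribed $(s,W,m')$. Because all relevant indices are bounded by $\poly(n,W)$ (namely $s_v \le n$, $w_v \le W$, $m_v \le m' \le n^2$), each table has polynomial size and each merge is a polynomial-time convolution, so the entire algorithm runs in $\poly(n,W)$ time.

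The only delicate point is bookkeeping, not structure: one must ensure that each edge incident to $F^0$ is counted exactly once (handled by charging it to the forest endpoint's local check, or to the purely-$F^0$ offset when both endpoints lie in $F^0$), and that the forbidden $L'$-to-$R'$ adjacencies into the fixed part are pre-filtered on a per-vertex basis before the DP is run. No global interaction between forest components arises because all non-forest edges are incident to $F^0$, which is already partitioned, so the different trees contribute independently and their tables multiply/convolve cleanly.
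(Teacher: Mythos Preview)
Your proposal is correct and follows essentially the same approach as the paper: a bottom-up DP on the rooted forest $G[V\setminus F^0]$ with tables indexed by the current vertex's label in $\{F',L',R'\}$ together with the running totals $(s,W,m')$, combining children by convolution and trees by a final convolution. The paper's write-up differs only cosmetically (it processes children one at a time via an index $[v,i]$ rather than describing a single child-merge convolution), and it bakes the $F^0$-incident edge contributions into the recurrence exactly where you charge them to the forest endpoint.
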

\begin{proof}

We denote $F_0 = F \cup L \cup R$ for the given FVS.
We will use dynamic programming over the forest induced by $V \setminus (F \cup L \cup R)$, in a way very similar to the proof of~\cite[Theorem~B.1]{cutandcount-arxiv}. 
We assign roots to each tree in the forest $V(G) \setminus F_0$ arbitrarily, so the standard relations parents, children, ancestors and descendants are well-defined.
For a vertex $v$, we denote $T[v]$ for the tree induced by $v$ and all its descendants. 
If $v$ has $d$ children (which we order in arbitrary fashion) and $i \leq d$, we also denote $T[v,i]$ for the tree induced by $v$ and all descendants of its first $i$ children.

For $x \in \{L',R',F'\}$, the table entries for the dynamic programming are defined as follows:
\begin{align*}
        A^{(x)}_{W,s,m'}[v,i] := |\{ (F',L',R') \in \binom{V(T[v,i] )\cup F_0}{\cdot,\cdot,\cdot}&: F' \cap F_0 = F \wedge L' \cap F_0 = L \wedge R' \cap F_0 = R \wedge \omega(F')= W\\
                                            &\wedge E[L',R']=\emptyset \wedge |F'|=s \wedge |E[L' \cup R']|=m' \wedge v \in x \}|.
\end{align*}
For convenience we also denote $A^{(x)}_{W,s,m'}[v]$ for $A^{(x)}_{W,s,m'}[v,d]$, where $d$ is number of children of $v$.

If $v$ is a leaf of a tree in the forest $V \setminus F_0$, then it is easy to see that we have
        \[
                A^{(x)}_{W,s,m'}[v,0] =
                \begin{cases}
                1, &\text{if } \omega(v)=W-\omega(F) \wedge |F|=s \wedge |E[L^* \cup R^*]|=m' \wedge E[L^*,R^*]= \emptyset \\
                0, &\text{otherwise},
                \end{cases}     
        \]
        where $L^*$ denotes $L' \cup \{v\}$ if $x= L'$ and $L'$ otherwise, and similarly $R^*$ denotes $R' \cup \{v\}$ if $x=R'$ and $R'$ otherwise.

If $v$ has children $v_1,\ldots,v_d$ in the forest $V \setminus F_0$, we have that
        \begin{align*}
                A^{(F')}_{W,s,m'}[v,1] &=  \sum_{x'\in \{L',R',F'\} } A^{(x')}_{W-\omega(F'),s-1,m'}[v_1]\\
                A^{(L')}_{W,s,m'}[v,1] &=  [N(v) \cap R = \emptyset](A^{(L')}_{W,s,m'-|N(v) \cap L|-1}[v_1]+ A^{(F')}_{W,s,m'-|N(v) \cap L|}[v_1])\\
                A^{(R')}_{W,s,m'}[v,1] &=  [N(v) \cap L = \emptyset](A^{(R')}_{W,s,m'-|N(v) \cap R|-1}[v_1]+ A^{(F')}_{W,s,m'-|N(v) \cap R|}[v_1])\\
        \end{align*}
        Here we use Iverson's bracket notation $[b]$ for a Boolean predicate $b$ which denotes $1$ if $b$ is true and $0$ otherwise.
        
        To see that this holds, note we need to account for the possible contributions of $v$ to $\omega(F')$, $|F'|$ and need to check whether $E[L',R']=\emptyset$ is not violated and account for an increase of $E[L' \cup R']$ which may include the edge $\{v,v_1\}$.
        
        Moreover, for $i>1$ we have that
        
        \begin{align*}
        A^{(F')}_{W,s,m'}[v,i] &=  \sum_{\substack{x' \in \{L',R',F'\} \\ W_1+ W_2 = W - |F| \\ s_1+s_2= m' - |F| \\ m'_1+m_2= m' - |E[L \cup R]|}} A^{(x)}_{W_1,s_1,m'_1}[v,i-1]* A^{(x')}_{W_2,s_2,m'_2}[v_i]. \\
        A^{(L')}_{W,s,m'}[v,i] &=  \sum_{\substack{x' \in \{L',F'\} \\ W_1+ W_2 = W - |F| \\ s_1+s_2= m' - |F| \\ m'_1+m_2= m' - |E[L \cup R]|-[x'=L']}} A^{(x)}_{W_1,s_1,m'_1}[v,i-1]* A^{(x')}_{W_2,s_2,m'_2}[v_i]. \\
        A^{(R')}_{W,s,m'}[v,i] &=  \sum_{\substack{x' \in \{R',F'\} \\ W_1+ W_2 = W - |F| \\ s_1+s_2= m' - |F| \\ m'_1+m_2= m' - |E[L \cup R]|-[x'=R']}} A^{(x)}_{W_1,s_1,m'_1}[v,i-1]* A^{(x')}_{W_2,s_2,m'_2}[v_i].
        \end{align*}
        
        Similarly as before we need account for the possible contributions of $v$ to $\omega(F')$, $|F'|$ and need to check whether $E[L',R']=\emptyset$ is not violated and account for an increase of $E[L' \cup R']$ which may include the edge $\{v,v_1\}$.
        Note we compensate for double counting due to $F,L,R$.
        
        Finally we can merge the counts stored at the roots of each tree of the forest to get the desired value. Specifically, if the the roots are $r_1,\ldots,r_c$ then
        \begin{equation}\label{eq:mergeforest}
                c^{\omega,s,m'}_W(F,L,R) = \sum_{\substack{ x_1,\ldots,x_c \in \{L',R',F\} \\  W_1+ \ldots + W_d = W -(d-1)|F| \\ s_1+\ldots+s_d= m' - (d-1)|F| \\ m'_1+\ldots+m'_d= m' - (d-1)|E[L \cup R]|}} \prod_{i=1}^d A^{(x_i)}_{W_i,s_i,m'_i}[r_i].
        \end{equation}
        Here we again compensate for double counting due to $F,L,R$. 
        Given all entries $A^{(x_i)}_{W_i,s_i,m'_i}[r_i]$, we can combine~\eqref{eq:mergeforest} with standard dynamic programming to compute $c^{\omega,s,m'}_W(F,L,R)$ is polynomial time.
\end{proof}

\subsection{Cut and Count Using Simple Separation: Proof of Lemma~\ref{lem:btd-count}}
\label{subsec:ccsimplesep}
The algorithm promised by the lemma is listed in Algorithm~\ref{alg:btd-count}.
For the claimed time bound, note all steps are polynomial time except Lines~\ref{line:sloop},~\ref{line:aloop} and~\ref{line:bloop}, and these jointly give rise to $3^{(|S|+|A\cap F|)}+3^{(|S|+|B\cap F|)}$ iterations.
As the separation $(A,B,S)$ was assumed to satisfy the properties $|A\cap F|,|B\cap F|\ge(2^{-\bar d}-o(1))k$ and $|S|\le(1-2\cd2^{-\bar d}+o(1))k$ from Lemma~\ref{lem:sep}, the time bound follows.

For the correctness, we claim that at Line~\ref{line:rescheck} $\cnt=|\mathcal{C}^{\omega',k,m'}_W|$ for some $m'$, $W=i|V|^2+d$.
The lemma follows from this by Lemma~\ref{lem:forestfvsequiv}. To see the claim, observe that the algorithm iterates over all partitions $(F,L,R)$ of the separator $S$ in Line~\ref{line:sloop}. For each partition, and every way to split $W,k,m$ (Line~\ref{line:wkmloop}), the algorithm computes the number $\cntA$ (resp.\ $\cntB$) of ``extensions'' of the partition in $G[A\cup S]$ (resp.\ $G[B\cup S]$) that ``respect'' the split, and then multiples $\cntA$ and $\cntB$. To see why the two counts are multiplied, observe that there are no edges between $A$ and $B$ in the separation $(A,B,S)$, so extending into $G[A\cup S]$ is independent to extending into $G[B\cup S]$.

%\jl{I didn't understand this paragraph, so I wrote my own paragraph above (starting with "For the correctness..." What do you think?)}
%To see the claim, note that $\cntA$ ($\cntB$) counts partitions $(F,V,S)$ of the vertex set of $G[A \cup S]$ ($G[B \cup S]$). To merge the two counts we compensate for the over counting due to vertices in $S$.

\begin{algorithm}[H]
        \mylabel{alg:btd-count}{\texttt{BFVS1}}
        \caption{\ref{alg:btd-count}$(G,F,k,A,B,S)$}
        \small
        \textbf{Input}: Graph $G=(V,E)$, FVS $F$ of size $k$, parameters $k,\bar d\le n$, and separation $(A,B,S)$ from Lemma~\ref{lem:sep}\\
        \textbf{Output}: A FVS of size at most $k$ satisfying $\deg(F)\le \bar dk$, or \Inf if none exists. 
        \begin{algorithmic}[1]
                \State Pick $\omega \in_R \{1,\ldots,2|V|\}$ uniformly and independently at random for every $v \in V(G)$
                \State Set $\omega'(v):= |V|^2\omega(v)+d(v)$
                \State $\cnt \gets 0$  
                \For{$m',W$ such that $0\leq m'\leq m$, $W = |V^2|i+d \leq \omega(V)$ for some $d\leq \bar dk$}
                \For{$(F_S,L_S,R_S) \in \binom{S}{\cdot,\cdot,\cdot}$}\label{line:sloop}
                \For{$W_1,k_1,m'_1$ such that $0\leq W_1 \leq W, 0\leq k_1 \leq k,0 \leq m'_1 \leq m'$} \label{line:wkmloop}
                \State $\cntA \gets 0$
                \For{$(F_A,L_A,R_A) \in \binom{A \cap F}{\cdot,\cdot,\cdot}$}\label{line:aloop}
                \State $\cntA \gets\cntA+ \forestDP(G[A \cup S],\omega,F_A \cup F_S,L_A \cup L_S,R_A \cup R_S,k_1,m'_1,W_1)$
                \EndFor
                \State $\cntB \gets 0$
                \For{$(F_B,L_B,R_B) \in \binom{B \cap F}{\cdot,\cdot,\cdot}$}\label{line:bloop}
                \State $\cntB \gets\cntB+ \forestDP(G[B \cup S],\omega,F_B \cup F_S,$
                \item[]  $ \hspace{15em}         L_B \cup L_S,R_B \cup R_S,k+|F_S|-k_1,m'+|E[L_S \cup R_S]|-m'_1,W+\omega(F_S)-W_1)$
                \EndFor
                \State $\cnt \gets \cnt + \cntA\cd\cntB$
                \EndFor
                \EndFor
                \EndFor
                \If{$\cnt \not\equiv 0\ (\mathrm{mod}\ 2^{n-k-m'})$} \label{line:rescheck}
                \State \Return a FVS of $G[\{v_1,\lds,v_i\}]$ of size $\le k$ satisfying $\deg(F)\le\bar dk$, constructed by self-reduction
                \EndIf
                \State \Return \Inf
        \end{algorithmic}
\end{algorithm}
\subsection{Cut and Count Using $3$-way Separation: Proof of Lemma~\ref{lem:mmalgo}}
\label{subsec:cc3waysep}

We now present the improved BFVS algorithm below. First, we argue its correctness, that at Line~\ref{line:mm17}, $\cnt = |\m C_W^{\om',k,m'}|$. First, the algorithm iterates over partitions of $S_{1,2,3}$ in Line~\ref{line:mm-for0} the same way Algorithm~\ref{alg:btd-count} iterates over partitions of $S$. The rest of the algorithm, which includes the matrix multiplication routine, seeks to compute the number of extensions in $S_1\cup S_2\cup S_3$ given each partition of $S_{1,2}\cup S_{1,3}\cup S_{2,3}$ (and given the current partition of $S_{1,2,3}$, as well as a three-way split of $W,k,m'$). Like in the case of separator $(A,B,S)$ in \ref{alg:btd-count}, it is true that the extensions of $S_1$, $S_2$, and $S_3$ are independent given the partition of $S_{1,2}\cup S_{1,3}\cup S_{2,3}$, but in this case, the size of $|S_{1,2}\cup S_{1,3}\cup S_{2,3}|$ can be prohibitively large. Instead, to compute this quantity efficiently, first observe that since there are no edges between $S_1$ and $S_{2,3}$, the number of extensions of $S_1$ only depends on the partition of $S_{1,2}\cup S_{1,3}$, and not $S_{2,3}$. For each partition of $S_{1,2}\cup S_{1,3}$, take the graph $H$ defined in Line~\ref{line:mm5}, and imagine adding an edge between the respective partitions of $S_{1,2}$ and $S_{1,3}$, weighted by the number of extensions in $S_1$. We proceed analogously for extensions of $S_2$ and $S_3$. Finally, the total number of extensions (given the partition of $S_{1,2,3}$) amounts to computing, for all triangles in $H$, the product of the weights of the three edges (Line~\ref{line:mm13}), which can be solved by a standard matrix multiplication routine.

Finally, the desired running time bound is more complicated for \ref{alg:mm}. We prove \Cref{lem:btd-mm} below which, together with \Cref{lem:3way}, implies the running time bound of Lemma~\ref{lem:mmalgo}.

\begin{algorithm}[H]
        \mylabel{alg:mm}{\texttt{BFVS2}}
        \caption{\ref{alg:mm}$(G,k,S_1,S_2,S_3,S_{1,2},S_{1,3},S_{2,3},S_{1,2,3})$}
        \small
        \textbf{Input}: Graph $G=(V,E)$, FVS $F$, parameters $k,\bar d\le n$, and separation $(S_1,S_2,S_3,S_{1,2},S_{1,3},S_{2,3},S_{1,2,3})$ from \Cref{lem:3way} \\
        \textbf{Output}: A FVS of size at most $k$ satisfying $\deg(F)\le \bar dk$, or \Inf if none exists. 
        
        %\textbf{Runtime}: $\Os(3^{(1-2^{-56}+o(1))k})$
        \begin{algorithmic}[1]

                \For {$\poly(n)$ iterations}
                \State Pick $\omega \in_R \{1,\ldots,2|V|\}$ uniformly and independently at random for every $v \in V(G)$ %\jes{probably want to boost prob here?} \jl{added for loop, now boosted to prob $\exp(-n)$ for simplicity}
                \State Set $\omega'(v):= |V|^2\omega(v)+d(v)$
                \State $\cnt\gets0$
                \For{$m',W$ such that $0\leq m'\leq m$, $W = |V^2|i+d \leq \omega(V)$ for some $d\leq \bar dk$}
                \For{$(F_0,L_0,R_0)\in \bn{S_{1,2,3}}{\cd,\cd,\cd}$}\label{line:mm-for0}
                \For{nonnegative $W_i,k_i,m'_i$, $i\in[3]$ such that $\sum_iW_i=W,\ \sum_ik_i=k,\ \sum_im'_i= m'$}
                \State $H\gets$ an empty graph with vertices indexed by $\bn{S_{1}}{\cd,\cd,\cd}\cup \bn{S_{2}}{\cd,\cd,\cd}\cup \bn{S_{3}}{\cd,\cd,\cd}$\label{line:mm5}
                \For{$(i,j,k)$ in $\{ (1,2,3), (2,3,1), (3,1,2) \}$}
                \For{$(F_1,L_1,R_1) \in \binom{S_{i,j}}{\cdot,\cdot,\cdot}$, $(F_2,L_2,R_2) \in \binom{S_{i,k}}{\cdot,\cdot,\cdot}$}
                
                \State $\cntthree \gets 0$
                \For{$(F_3,L_3,R_3) \in \binom{S_i \cap F}{\cdot,\cdot,\cdot}$}
                \State $\cntthree\gets\cntthree+\forestDP(G[S_i],\omega,F_3,L_3,R_3,k_i,m'_i,W_i)$ 
                \EndFor
                \State Add an edge $e$ between vertices $(F_1,L_1,R_1)$ and $(F_2,L_2,R_2)$ of $H$ 
                \State Assign weight $\cnt3\ (\mathrm{mod}\ 2^{n-k-m'})$ to the edge $e$ \label{line:mm12}
                \EndFor
                \EndFor
                \State $\cntzero \gets$ sum over the product of the three edges of all triangles in $H$ \label{line:mm13}
                \State $\cnt\gets\cnt+\cntzero$
                \EndFor
                \EndFor
                \EndFor
                
                \If{$\cnt \not\equiv 0\ (\mathrm{mod}\ 2^{n-k-m'})$} \label{line:mm17}
                \State \Return a FVS of $G[\{v_1,\lds,v_i\}]$ of size $\le k$ satisfying $\deg(F)\le\bar dk$, constructed by self-reduction
                %\State Use self-reduction to construct a Feedback Vertex Set of $G[\{v_1,\ldots,v_i\}]$ of size $k$ satisfying $\deg_{}(F)\leq \bar d k$
                \EndIf
                \EndFor
                \State \Return \Inf
        \end{algorithmic}
\end{algorithm}

\BL\label{lem:btd-mm}
For any constant $\e>0$, the \BFVS problem with parameters $k$ and $\bar d$ can be solved in time $\Os(3^{(1-\min\{ (2/3)^{\bar d}, (3-\om)(2/3)^{\bar d} + (2\om-3)3^{-\bar d} \}+o(1))k})$.
\EL
\BP
Let $f_1,f_2$ be the values from \Cref{lem:3way}, and let $f_3:=1-3f_1-3f_2$, so that $(f_3-o(1))k\le|S_{1,2,3}| \le (f_3+o(1))k$. For each of the $\Os(3^{f_3+o(1)})$ iterations on Line~\ref{line:mm-for0}, building the graph $H$ (Lines~\ref{line:mm5}~to~\ref{line:mm12}) takes time $\Os(3^{2f_2+f_1+o(k)})$, and running matrix multiplication (Line~\ref{line:mm13}) on a graph with $\Os(3^{f_2+o(k)})$ vertices to compute the sum over the product of the three edges of all triangles takes time $\Os(3^{\om f_2+o(k)})$. Therefore, the total running time is
\begin{align*}
\Os(3^{f_3+o(k)} (3^{2f_2+f_1+o(k)} + 3^{\om f_2+o(k)})) &= \Os(3^{f_3+2f_2+f_1+o(k)} + 3^{f_3+\om f_2+o(k)})
\\&= \Os(3^{1-f_2-2f_1+o(k)} + 3^{1-(3-\om)f_2-3f_1+o(k)})
\\&= \Os(3^{1-(f_2+2f_1)+o(k)} + 3^{1-(3-\om)(f_2+2f_1)-(2\om-3)f_1+o(k)})
\\&\le \Os(3^{1-(2/3)^{\bar d}+o(k)} + 3^{1-(3-\om)(2/3)^{\bar d}-(2\om-3)\cd3^{-\bar d}+o(k)}),
\end{align*}
where the last inequality uses Conditions (1a) and (1b) of \Cref{lem:3way}, and the fact that $2\om-3\ge0$.
\EP

\bibliographystyle{alpha}
\bibliography{ref}

\newcommand{\etalchar}[1]{$^{#1}$}
\begin{thebibliography}{GGH{\etalchar{+}}06}

\bibitem[AEFM89]{DBLP:conf/focs/AbrahamsonEFM89}
Karl~R. Abrahamson, John~A. Ellis, Michael~R. Fellows, and Manuel~E. Mata.
\newblock On the complexity of fixed parameter problems (extended abstract).
\newblock In {\em 30th Annual Symposium on Foundations of Computer Science,
  Research Triangle Park, North Carolina, USA, 30 October - 1 November 1989},
  pages 210--215, 1989.

\bibitem[AGSS16]{DBLP:conf/iwpec/AgrawalGSS16}
Akanksha Agrawal, Sushmita Gupta, Saket Saurabh, and Roohani Sharma.
\newblock Improved algorithms and combinatorial bounds for independent feedback
  vertex set.
\newblock In {\em 11th International Symposium on Parameterized and Exact
  Computation, {IPEC} 2016, August 24-26, 2016, Aarhus, Denmark}, pages
  2:1--2:14, 2016.

\bibitem[BBG00]{DBLP:journals/jair/BeckerBG00}
Ann Becker, Reuven Bar{-}Yehuda, and Dan Geiger.
\newblock Randomized algorithms for the loop cutset problem.
\newblock {\em J. Artif. Intell. Res.}, 12:219--234, 2000.

\bibitem[BGNR98]{DBLP:journals/siamcomp/Bar-YehudaGNR98}
Reuven Bar{-}Yehuda, Dan Geiger, Joseph Naor, and Ron~M. Roth.
\newblock Approximation algorithms for the feedback vertex set problem with
  applications to constraint satisfaction and bayesian inference.
\newblock {\em {SIAM} J. Comput.}, 27(4):942--959, 1998.

\bibitem[Bod94]{DBLP:journals/ijfcs/Bodlaender94}
Hans~L. Bodlaender.
\newblock On disjoint cycles.
\newblock {\em Int. J. Found. Comput. Sci.}, 5(1):59--68, 1994.

\bibitem[Cao18]{DBLP:conf/soda/Cao18}
Yixin Cao.
\newblock A naive algorithm for feedback vertex set.
\newblock In {\em 1st Symposium on Simplicity in Algorithms, {SOSA} 2018,
  January 7-10, 2018, New Orleans, LA, {USA}}, pages 1:1--1:9, 2018.

\bibitem[CCL15]{DBLP:journals/algorithmica/CaoC015}
Yixin Cao, Jianer Chen, and Yang Liu.
\newblock On feedback vertex set: New measure and new structures.
\newblock {\em Algorithmica}, 73(1):63--86, 2015.

\bibitem[CDL{\etalchar{+}}16]{DBLP:journals/talg/CyganDLMNOPSW16}
Marek Cygan, Holger Dell, Daniel Lokshtanov, D{\'{a}}niel Marx, Jesper
  Nederlof, Yoshio Okamoto, Ramamohan Paturi, Saket Saurabh, and Magnus
  Wahlstr{\"{o}}m.
\newblock On problems as hard as {CNF-SAT}.
\newblock {\em {ACM} Trans. Algorithms}, 12(3):41:1--41:24, 2016.

\bibitem[CFJ{\etalchar{+}}14]{openprobs}
Marek Cygan, Fedor Fomin, Bart~M.P. Jansen, Lukasz Kowalik, Daniel Lokshtanov,
  Daniel Marx, Marcin Pilipczuk, Michal Pilipczuk, and Saket Saurabh.
\newblock Open problems for fpt school 2014, 2014.
\newblock http://fptschool.mimuw.edu.pl/opl.pdf.

\bibitem[CFK{\etalchar{+}}15]{DBLP:books/sp/CyganFKLMPPS15}
Marek Cygan, Fedor~V. Fomin, Lukasz Kowalik, Daniel Lokshtanov, D{\'{a}}niel
  Marx, Marcin Pilipczuk, Michal Pilipczuk, and Saket Saurabh.
\newblock {\em Parameterized Algorithms}.
\newblock Springer, 2015.

\bibitem[CFL{\etalchar{+}}08]{DBLP:journals/jcss/ChenFLLV08}
Jianer Chen, Fedor~V. Fomin, Yang Liu, Songjian Lu, and Yngve Villanger.
\newblock Improved algorithms for feedback vertex set problems.
\newblock {\em J. Comput. Syst. Sci.}, 74(7):1188--1198, 2008.

\bibitem[CKX10]{DBLP:journals/tcs/ChenKX10}
Jianer Chen, Iyad~A. Kanj, and Ge~Xia.
\newblock Improved upper bounds for vertex cover.
\newblock {\em Theor. Comput. Sci.}, 411(40-42):3736--3756, 2010.

\bibitem[CNP{\etalchar{+}}11]{cutandcount-arxiv}
Marek Cygan, Jesper Nederlof, Marcin Pilipczuk, Michal Pilipczuk, Johan M.~M.
  van Rooij, and Jakub~Onufry Wojtaszczyk.
\newblock Solving connectivity problems parameterized by treewidth in single
  exponential time.
\newblock {\em CoRR}, abs/1103.0534, 2011.
\newblock Extended abstract appeared at FOCS 2011.

\bibitem[CS15]{DBLP:conf/sat/ChenS15}
Ruiwen Chen and Rahul Santhanam.
\newblock Improved algorithms for sparse {MAX-SAT} and max-k-csp.
\newblock In {\em Theory and Applications of Satisfiability Testing - {SAT}
  2015 - 18th International Conference, Austin, TX, USA, September 24-27, 2015,
  Proceedings}, pages 33--45, 2015.

\bibitem[Dec90]{DBLP:journals/ai/Dechter90}
Rina Dechter.
\newblock Enhancement schemes for constraint processing: Backjumping, learning,
  and cutset decomposition.
\newblock {\em Artif. Intell.}, 41(3):273--312, 1990.

\bibitem[DF92]{DBLP:conf/dagstuhl/DowneyF92}
Rodney~G. Downey and Michael~R. Fellows.
\newblock Fixed parameter tractability and completeness.
\newblock In {\em Complexity Theory: Current Research, Dagstuhl Workshop,
  February 2-8, 1992}, pages 191--225, 1992.

\bibitem[DFL{\etalchar{+}}07]{DBLP:journals/mst/DehneFLRS07}
Frank K. H.~A. Dehne, Michael~R. Fellows, Michael~A. Langston, Frances~A.
  Rosamond, and Kim Stevens.
\newblock An o(2\({}^{\mbox{o(k)}}\)n\({}^{\mbox{3}}\)) {FPT} algorithm for the
  undirected feedback vertex set problem.
\newblock {\em Theory Comput. Syst.}, 41(3):479--492, 2007.

\bibitem[DHJ{\etalchar{+}}16]{DBLP:conf/iwpec/DellHJKKR16}
Holger Dell, Thore Husfeldt, Bart M.~P. Jansen, Petteri Kaski, Christian
  Komusiewicz, and Frances~A. Rosamond.
\newblock The first parameterized algorithms and computational experiments
  challenge.
\newblock In {\em 11th International Symposium on Parameterized and Exact
  Computation, {IPEC} 2016, August 24-26, 2016, Aarhus, Denmark}, pages
  30:1--30:9, 2016.

\bibitem[FGLS16]{DBLP:conf/stoc/FominGLS16}
Fedor~V. Fomin, Serge Gaspers, Daniel Lokshtanov, and Saket Saurabh.
\newblock Exact algorithms via monotone local search.
\newblock In {\em Proceedings of the 48th Annual {ACM} {SIGACT} Symposium on
  Theory of Computing, {STOC} 2016, Cambridge, MA, USA, June 18-21, 2016},
  pages 764--775, 2016.

\bibitem[Gal14]{DBLP:conf/issac/Gall14a}
Fran{\c{c}}ois~Le Gall.
\newblock Powers of tensors and fast matrix multiplication.
\newblock In {\em International Symposium on Symbolic and Algebraic
  Computation, {ISSAC} '14, Kobe, Japan, July 23-25, 2014}, pages 296--303,
  2014.

\bibitem[GGH{\etalchar{+}}06]{DBLP:journals/jcss/GuoGHNW06}
Jiong Guo, Jens Gramm, Falk H{\"{u}}ffner, Rolf Niedermeier, and Sebastian
  Wernicke.
\newblock Compression-based fixed-parameter algorithms for feedback vertex set
  and edge bipartization.
\newblock {\em J. Comput. Syst. Sci.}, 72(8):1386--1396, 2006.

\bibitem[GLL{\etalchar{+}}18]{DBLP:journals/corr/abs-1804-01366}
Anupam Gupta, Euiwoong Lee, Jason Li, Pasin Manurangsi, and Michal Wlodarczyk.
\newblock Losing treewidth by separating subsets.
\newblock {\em CoRR}, abs/1804.01366, 2018.
\newblock To appear at SODA 2019.

\bibitem[JJ17]{DBLP:conf/ciac/JaffkeJ17}
Lars Jaffke and Bart M.~P. Jansen.
\newblock Fine-grained parameterized complexity analysis of graph coloring
  problems.
\newblock In {\em Algorithms and Complexity - 10th International Conference,
  {CIAC} 2017, Athens, Greece, May 24-26, 2017, Proceedings}, pages 345--356,
  2017.

\bibitem[Kar72]{DBLP:conf/coco/Karp72}
Richard~M. Karp.
\newblock Reducibility among combinatorial problems.
\newblock In {\em Proceedings of a symposium on the Complexity of Computer
  Computations, held March 20-22, 1972, at the {IBM} Thomas J. Watson Research
  Center, Yorktown Heights, New York, {USA}}, pages 85--103, 1972.

\bibitem[KK18]{DBLP:conf/soda/KimK18}
Eun~Jung Kim and O{-}joung Kwon.
\newblock Erd{\H{o}}s-p{\'{o}}sa property of chordless cycles and its
  applications.
\newblock In {\em Proceedings of the Twenty-Ninth Annual {ACM-SIAM} Symposium
  on Discrete Algorithms, {SODA} 2018, New Orleans, LA, USA, January 7-10,
  2018}, pages 1665--1684, 2018.

\bibitem[KP14]{DBLP:journals/ipl/KociumakaP14}
Tomasz Kociumaka and Marcin Pilipczuk.
\newblock Faster deterministic feedback vertex set.
\newblock {\em Inf. Process. Lett.}, 114(10):556--560, 2014.

\bibitem[KPS04]{DBLP:conf/iwpec/KanjPS04}
Iyad~A. Kanj, Michael~J. Pelsmajer, and Marcus Schaefer.
\newblock Parameterized algorithms for feedback vertex set.
\newblock In {\em Parameterized and Exact Computation, First International
  Workshop, {IWPEC} 2004, Bergen, Norway, September 14-17, 2004, Proceedings},
  pages 235--247, 2004.

\bibitem[MVV87]{MulmuleyVV87}
Ketan Mulmuley, Umesh~V. Vazirani, and Vijay~V. Vazirani.
\newblock Matching is as easy as matrix inversion.
\newblock {\em Combinatorica}, 7(1):105--113, 1987.

\bibitem[PBvR16]{DBLP:conf/iwpec/PinoBR16}
Willem J.~A. Pino, Hans~L. Bodlaender, and Johan M.~M. van Rooij.
\newblock Cut and count and representative sets on branch decompositions.
\newblock In {\em 11th International Symposium on Parameterized and Exact
  Computation, {IPEC} 2016, August 24-26, 2016, Aarhus, Denmark}, pages
  27:1--27:12, 2016.

\bibitem[RSS02]{DBLP:conf/isaac/RamanSS02}
Venkatesh Raman, Saket Saurabh, and C.~R. Subramanian.
\newblock Faster fixed parameter tractable algorithms for undirected feedback
  vertex set.
\newblock In {\em Algorithms and Computation, 13th International Symposium,
  {ISAAC} 2002 Vancouver, BC, Canada, November 21-23, 2002, Proceedings}, pages
  241--248, 2002.

\bibitem[RSS06]{DBLP:journals/talg/RamanSS06}
Venkatesh Raman, Saket Saurabh, and C.~R. Subramanian.
\newblock Faster fixed parameter tractable algorithms for finding feedback
  vertex sets.
\newblock {\em {ACM} Trans. Algorithms}, 2(3):403--415, 2006.

\bibitem[RSV04]{DBLP:journals/orl/ReedSV04}
Bruce~A. Reed, Kaleigh Smith, and Adrian Vetta.
\newblock Finding odd cycle transversals.
\newblock {\em Oper. Res. Lett.}, 32(4):299--301, 2004.

\bibitem[WLS85]{DBLP:journals/jacm/WangLS85}
Ching{-}Chy Wang, Errol~L. Lloyd, and Mary~Lou Soffa.
\newblock Feedback vertex sets and cyclically reducible graphs.
\newblock {\em J. {ACM}}, 32(2):296--313, 1985.

\end{thebibliography}

\end{document}